\tikzstyle{vertex}=[circle, fill=white, draw, inner sep=2pt, minimum width=4pt, minimum size=18]
\newtheorem{theorem}{Theorem}
\newtheorem{lemma}[theorem]{Lemma}
\newtheorem{corollary}[theorem]{Corollary}
\newtheorem{definition}[theorem]{Definition}
\renewcommand{\S}{\mathcal{S}}
\newcommand{\N}{\mathbb{N}}
\newcommand{\R}{\mathbb{R}}
\newcommand{\seq}[1]{\langle #1 \rangle}
\renewcommand\leq\leqslant
\renewcommand\geq\geqslant
\date{}
\newenvironment{subproof}[1][\proofname]{%
  \renewcommand{\qed}{\begin{flushright}$\blacksquare$\end{flushright}}%
  \begin{proof}[#1]%
}{%
  \end{proof}%
}
\newcommand{\fullversion}[1]{}
\newcommand{\confversion}[1]{#1}
\newcounter{Bew1}
\newcounter{Bew2}
\newcounter{Def1}
\newcommand{\appendixproof}[2]{

   \stepcounter{Bew1}
   \label{L\arabic{Bew1}}
   \gappto{\appendixProofText}{ \phantomsection \stepcounter{Bew2}\label{\arabic{Bew2}} \subsection{Proof~\textbf{\arabic{Bew2}} (#1)} #2}
	\vspace{-2pt}
}
\newcommand{\problemopt}[3]{%
\begin{center}
  \begin{tabular}{|l|}%
  \hline
    \begin{minipage}[c]{.97\linewidth}
      \smallskip%
      \par\noindent%
      \textsc{#1}
      \medskip%
      \par\noindent%
      $\bullet$
      \textbf{\textsf{Input}}: #2%
      \par\noindent%
      $\bullet$
      \textbf{\textsf{Output}}:
      #3%
      \smallskip%
      \par\noindent%
    \end{minipage}
  \\\hline
  \end{tabular}%
\end{center}
}%
\begin{document}

\title{Finding Disjoint Paths on Edge-Colored Graphs:  More Tractability Results\thanks{An extended abstract of this work appears in~\cite{Dondi2016}}}

\author{Riccardo Dondi\footnote{\noindent Dipartimento di Scienze umane e sociali,  Universit\`a degli Studi di Bergamo, ITALY \tt{riccardo.dondi@unibg.it}} \and Florian Sikora\footnote{Universit\'{e} Paris-Dauphine, PSL Research University, CNRS, LAMSADE, PARIS, FRANCE,  \tt{florian.sikora@dauphine.fr}}
}

\maketitle

\begin{abstract}
  The problem of finding the maximum number of vertex-disjoint uni-color
  paths in an edge-colored graph (called \textsc{MaxCDP}) has been recently introduced in
  literature, motivated by applications in social network analysis.
  In this paper we investigate how the complexity of the problem depends
  on graph parameters (namely the number of vertices to remove to make the graph a collection of disjoint paths and the size of the vertex cover of the graph), which makes sense since graphs in social networks are not random and have structure. 
The problem was known to be hard to approximate in polynomial time and not fixed-parameter tractable (FPT) for the natural parameter. 
Here, we show that it is still hard to approximate, even in FPT-time.
Finally, we introduce a new variant of the problem, called \textsc{MaxCDDP}, whose goal is to find the maximum number of vertex-disjoint and color-disjoint uni-color paths. 
We extend some of the results of \textsc{MaxCDP} to this new variant, and we prove that unlike \textsc{MaxCDP}, \textsc{MaxCDDP} is already hard on graphs at distance two from disjoint paths.

\end{abstract}

\section{Introduction}

The analysis of social networks and social media has introduced several
interesting problems from an algorithmic point of view.
Social networks are usually represented as graphs, 
where vertices represent the elements of the network, and edges represent a binary relation
between the represented elements. 
There are several relevant properties that can be considered when analyzing a social networks,
e.g. degree centrality, closeness centrality, eigenvector centrality.
An interesting property is the vertex connectivity of two given vertices, as it is
related to the information flow inside a network and 
vertex connectivity is considered as a measure of the information flow.
Furthermore, group cohesiveness and centrality are two other
fundamental structural properties that are related to vertex connectivity~\cite{HannemanRiddle,WassermanFaust}.
A well-known result in graph theory, Menger's theorem,
shows that vertex connectivity is equivalent to the maximum number of disjoint
paths between two given vertices.

While social networks analyses usually focus on a single type of relation, 
the availability of several social networks lead to the problem of  
integrating the information of several networks into a single network.
A model for studying vertex connectivity in a multi-relational social network
has been introduced by Wu~\cite{Wu12MaxCDP}, 
where different kinds of relations are considered. In the proposed model,
colors are associated with edges of the graph to
distinguish different kinds of relations. In order to study vertex connectivity in such an edge-colored graph, 
Wu~\cite{Wu12MaxCDP} introduced a natural combinatorial problem,
called  \emph{Maximum Colored Disjoint Paths} (\textsc{MaxCDP}), 
that asks for the maximum number of vertex-disjoint paths consisting of edges of the same colors 
(also called uni-color paths) in the input graph between two terminals. 
A related combinatorial problem, called Multicolor Path problem, have been considered in~\cite{Santos2016}, 
with application in WDM optical networks.
Given an edge-colored directed graph, where each edge is associated with a positive weight and a set of colors,
Multicolor Path problem asks for one shortest path 
between a source and a target in the graph, such that edges in the path share a set of at least $k$ colors.
The Multicolor Path problem have also been extended to ask for a set of $p \geq 1$ paths~\cite{Santos2016}. 

The complexity of \textsc{MaxCDP} has been investigated in~\cite{Wu12MaxCDP,DBLP:journals/algorithms/BonizzoniDP13,Gourves2012}.
\textsc{MaxCDP} is polynomial time solvable when the input graph contains exactly one color as it is 
can be reduced to the maximum flow problem~\cite{Wu12MaxCDP}.
When the edges of the input graph are associated with at least two colors, the 
problem is NP-hard~\cite{Wu12MaxCDP}. Moreover, 
the problem is even NP-hard when the number of paths is fixed to 2 (and 
the graph has degree bounded by 4), thus not in the class XP for the parameter number of paths~\cite{Gourves2012}.

The approximation complexity of the problem has also been investigated.
On general instance, \textsc{MaxCDP} is not approximable within factor $O(n^{d})$, where $n$ is the number of vertices
of the input graph, for any constant $0<d<1$~\cite{DBLP:journals/algorithms/BonizzoniDP13}. 
Furthermore, \textsc{MaxCDP} is approximable within factor $q$, where
$q$ is the number of colors of the edges of the input graph, but
not approximable within factor $2 - \epsilon$, for any $\epsilon > 0$, when $q$ is a fixed constant~\cite{Wu12MaxCDP}.

Since many real networks, and in particular many social networks, have a bounded diameter, Wu introduced a variant of the \textsc{MaxCDP} problem where the length of the paths in the solution are (upper) bounded by an integer $l \geq 1$~\cite{Wu12MaxCDP}.
When $l \geq 4$ \textsc{MaxCDP} is NP-hard, while it is polynomial time solvable for $l < 4$~\cite{Wu12MaxCDP}.
The bounded length variant of \textsc{MaxCDP} is fixed-parameter tractable 
for the combined parameter number of paths in the solution and $l$~\cite{DBLP:journals/algorithms/BonizzoniDP13}. 
Moreover, this variant does not admit a polynomial kernel unless $NP \subseteq coNP/Poly$,
as it follows from the results in~\cite{DBLP:journals/disopt/GolovachT11}.
Wu~\cite{Wu12MaxCDP} considers also the approximation complexity of the bounded length variant of the problem,
showing that is is approximable within factor $(l-1)/2 + \epsilon$~\cite{Wu12MaxCDP}. 

In this paper, we further investigate the complexity of \textsc{MaxCDP} and of a related problem that we introduce, 
called \textsc{MaxCDDP}. Given an edge-colored graph, \textsc{MaxCDDP}
asks for the maximum number of vertex-disjoint and color-disjoint uni-color paths, where
two uni-color paths are color-disjoint if they having different colors.
We introduce color disjointness of paths as it can be useful how different relations in a network connects two vertices. 
In this case, we are not interested to have more paths of a single color, but rather to compute the maximum number 
of color-disjoint paths between two vertices.

In the spirit of a multivariate complexity analysis~\cite{DBLP:conf/mfcs/KomusiewiczN12,Fellows2013},
we study how the complexity of \textsc{MaxCDP} and \textsc{MaxCDDP} depends  on several parameters. 
It has already been studied how the complexity of \textsc{MaxCDP} depends 
on different parameters (number of colors of each edge, degree of the graph, maximum length of a path).
We believe that it is interesting to take into account the structure of the input graph when studying 
the complexity of these two problems for two main reasons. 
First, real-life networks often exhibit very specific structural property, like ``small-world phenomenon''
or small degree of separation, 
and thus information on the structure of the corresponding graphs can be derived.
Moreover, studying how the complexity of \textsc{MaxCDP} and \textsc{MaxCDDP} depends 
on different parameters is also of theoretical interest, 
since it helps to better understand the complexity of the two problems.

First, we investigate how the complexity of the two problems depends on two graph parameters: 
distance from disjoint paths (that is the number of vertices one has to remove to make the graph a collection of disjoint paths), and the size of vertex cover of the graph. 
In Section~\ref{sec:MAXCDDP_dic} we show that on graphs at distance bounded by a constant from disjoint paths 
\textsc{MaxCDP} admits a polynomial-time algorithm, whereas \textsc{MaxCDDP} is NP-hard even if the distance to 
disjoint path has distance two from  disjoint paths. 
This implies the hardness of \textsc{MaxCDDP} even when the treewidth of the input graph is bounded by a constant.
Moreover, we show that \textsc{MaxCDDP} admits a polynomial-time algorithm when the input graph, after the removal of the target vertex, is a tree. 
In Section~\ref{sec:ParamVC} we show that 
\fullversion{\textsc{MaxCDP} is fixed-parameter tractable when parameterized by the size of the vertex cover of the input graph. Moreover, we show that \textsc{MaxCDDP} admits a parameterized $\frac{1}{2}$-approximation algorithm, 
when parameterized by the size of the vertex cover of the input graph.}
\confversion{\textsc{MaxCDP} is fixed-parameter tractable when parameterized by the size of the vertex cover of the input graph.}
In Section~\ref{sec:MaxCDDP-FPT} we consider the parameterized complexity of the bounded length version of \textsc{MaxCDDP}, for the combined parameter number of vertex and color-disjoint paths of a solution and 
maximum length of a path, and we extend the FPT algorithm for \textsc{MaxCDP} to \textsc{MaxCDDP}.
Finally, we consider the FPT-approximability and 
we show in Section~\ref{sec:FPTInapprox} that both problems are not $\rho$-approximable in FPT time, for any function $\rho$.

\section{Definitions}
\label{sec:Def}

In this section we present some definitions that will be useful in the rest of the paper, 
as well as the formal definition of the two combinatorial problems we are interested in.
First, notice that in this paper, we will consider undirected graphs. 
Given a graph $G=(V,E)$ and a vertex $v \in V$, we denote by $N(v)$ the neighborhood of $v \in V$,
that is $N(v)= \{ u: \{ u,v \} \in E \}$.

Consider a set of colors $C=\{c_1,\dots, c_q  \}$, where $|C|=q$. 
A $C$-edge-colored graph (or simply an edge-colored graph when the set of colors is clear from the context) 
is defined as $G=(V,E,f_C)$, where $V$ denotes the set of vertices of $G$
and $E$ denotes the set of edges, 
and $f_C: E \rightarrow 2^{C}$ is a function, called coloring, that associates
a set of colors in $C=\{c_1,\dots, c_q  \}$ with of each edge in $E$. 
In the remaining part of the paper, we denote by $n$ the size of $V$ and by $m$ the size of $E$.

A path $\pi$ in a $C$-edge-colored graph $G$ is said to be colored by $c_j \in C$ if 
all the edges of $\pi$ are colored by $c_j$. 
A path $\pi$ in $G$ is called a \emph{uni-color} path if there is  a color
$c_j \in C$ such that all the edges of $\pi$ are colored by $c_j$. 
Given two vertices $x,y \in V$, an $xy$-path is a path between vertices
$x$ and $y$.
In the remaining part of the paper, we will mainly consider a $C$-edge-colored graph $G$,
with two distinct vertices $s$ (the source) and $t$ (the target), and we will consider uni-color paths
between $s$ and $t$, that is uni-color $st$-paths.

Two paths $\pi'$ and $\pi''$ are \emph{internally disjoint} (or, simply,
\emph{disjoint}) if they do not share any internal vertex, while a set $P$
of paths is internally disjoint if the paths in $P$ are pairwise internally
disjoint.
Two uni-color paths $\pi'$ and $\pi''$ are \emph{color disjoint} if they 
are disjoint and they have different colors.

Next, we introduce the formal definitions of the problems we deal with.


\problemopt{\textsc{Max Colored Disjoint Path (MaxCDP)}}{a set $C$ of colors, a $C$-edge-colored graph $G=(V,E,f_C)$ and two vertices $s,t \in V$.}{the maximum number of disjoint uni-color $st$-paths.}


\problemopt{\textsc{Max Colored Doubly Disjoint Path (MaxCDDP)}}{a set $C$ of colors, a $C$-edge-colored graph $G=(V,E,f_C)$, and two vertices $s,t \in V$.}{the maximum number of color disjoint uni-color $st$-paths.}

We will consider a variant of the two problems  where the length of the paths in the solution is (upper) bounded by an integer $l \geq 1$, that is we are interested only in paths bounded by $l$. 
These variants will be denoted by  $l$-\textsc{MaxCDP} and $l$-\textsc{MaxCDDP}.

We have introduced the optimization definition of \textsc{MaxCDP} and \textsc{MaxCDDP}.
In the parameterized definitions of these problems, 
we are also given an integer $k>0$ and we look whether there exists at least $k$ (color) disjoint uni-color $st$-paths.

Moreover, we will consider how the complexity of the problem is influenced 
by the structure of the graph induced by the set $V \setminus \{t\}$; we will denote such graph as $G^{-t}$.

A parameterized problem $(I,k)$ is said \textit{fixed-parameter tractable} (or in the class FPT) with respect to a parameter $k$ if it can be solved in $f(k)\cdot|I|^c$ time (in \textit{fpt-time}), where $f$ is any computable function and $c$ is a constant.
The class XP contains problems solvable in time $|I|^{f(k)}$, where $f$ is an unrestricted function.
We defer the reader to the recent monographs of Downey and Fellows or Cygan et al. for additional information around parameterized complexity~\cite{Downey2013,Cygan15}. 

The natural notion of \textit{parameterized approximation} was introduced quite recently (see the survey of Marx for an overview~\cite{marx-approx}).
Informally, it aims at giving more time than polynomiality to achieve better approximation ratio.
We give the definition of fpt cost $\rho$-approximation algorithm, as in Section~\ref{sec:FPTInapprox} we will rule out the existence of such an algorithm for \textsc{MaxCDP} and \textsc{MaxCDDP}.
This is a weaker notion than fpt-approximation, but notice that 
we will prove negative result (which will thus be stronger).

An NP-optimization problem $Q$ is a tuple $({\cal I}, Sol, val, goal)$, where ${\cal I}$ is the set of instances, $Sol(I)$ is the set of feasible solutions for instance $I$, $val(I,S)$ is the value of a feasible solution $S$ of $I$, and $goal$ is either max or min.

\begin{definition}[fpt cost $\rho$-approximation algorithm, Chen et al.~\cite{DBLP:conf/iwpec/ChenGG06}]
Let $Q$ be an optimization problem  and $\rho\colon \N \rightarrow \R$ be a function such that $\rho(k) \geq 1$ for every $k\geq 1$ and  $k \cdot \rho(k)$  is nondecreasing (when $goal$ = min)
or 
$\frac{k}{\rho(k)}$
 is unbounded and nondecreasing  (when $goal$ = max).
 A decision algorithm ${\cal A}$ is an \emph{fpt cost $\rho$-approximation algorithm} 
 for $Q$ (when $\rho$ satisfies the previous conditions) if for every instance $I$ of $Q$ and integer $k$, with $Sol(I)\neq \emptyset$, its output satisfies the following conditions:

\begin{enumerate}
	\item If $opt(I) > k$ (when $goal$ = min) or $opt(I) < k$ (when $goal$ = max), then ${\cal A}$ rejects $(I,k)$.
	\item If $k\geq opt(I) \cdot \rho(opt(I))$ (when $goal$ = min) or 
	$k\leq \frac{opt(I)}{\rho(opt(I))}$
	 (when $goal$ = max), then ${\cal A}$ accepts $(I,k)$.
\end{enumerate}

 Moreover the running time of ${\cal A}$ on input $(I,k)$ is $f(k) \cdot |I|^{O(1)}$.
If such  a decision algorithm ${\cal A}$ exists then $Q$ is called fpt cost $\rho$-approximable.
\end{definition}

\section{\textsc{MaxCDDP} and \textsc{MaxCDP} on Graphs at Bounded Distance from Disjoint Paths}
\label{sec:MAXCDDP_dic}

In this section we consider how the complexity of \textsc{MaxCDP} and \textsc{MaxCDDP} on graphs
depends on some strong structural properties: distance bounded by a constant from disjoint paths
and graph $G^{-t}$ being a tree.
The distance to disjoint paths is the minimum number of vertices to remove to make the graph a set of disjoint paths.
The \textsc{MaxCDP} problem on a set of disjoint paths is trivially polynomial time solvable. 
Thus it is natural to consider the complexity of 
\textsc{MaxCDP} and \textsc{MaxCDDP} on graphs that are at small distance from this parameter.

First, we consider the \textsc{MaxCDDP} problem and 
we show that \textsc{MaxCDDP} is in P when the graph induced by the set $V \setminus \{t\}$ is a tree, 
while it is NP-hard for graphs at distance two from disjoint paths.



{
\subsection{\textsc{MaxCDDP} on Trees}

We now show that \textsc{MaxCDDP} is polynomial time solvable when the input graph $G^{-t}$ is a tree. 
Notice that \textsc{MaxCDP} when $G^{-t}$ is a tree is trivially solvable in polynomial time.

\begin{theorem}
Given an input graph $G$, \textsc{MaxCDDP} is in $P$ when $G^{-t}$ is a tree.
\end{theorem}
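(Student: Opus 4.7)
The plan is to root the tree $T := G^{-t}$ at $s$ and exploit the fact that, in a tree, the path between any two vertices is unique. Any uni-color $st$-path in $G$ then decomposes in exactly one of two ways: either as the unique tree path in $T$ from $s$ to some vertex $v \in V \setminus \{s,t\}$ with $\{v,t\} \in E$, followed by the edge $\{v,t\}$; or, when $\{s,t\} \in E$, as the single edge $\{s,t\}$. Such a path is uni-color of color $\kappa$ iff $\kappa \in f_C(e)$ for every edge $e$ on the tree portion and $\kappa \in f_C(\{v,t\})$ (resp.\ $\kappa \in f_C(\{s,t\})$).

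The key structural observation is that two distinct tree-going $st$-paths are internally vertex-disjoint iff they enter different children-subtrees of $s$ in $T$. Indeed, any two paths entering the subtree rooted at the same child $c$ of $s$ both pass through $c$ (which is an internal vertex), while paths entering different children-subtrees share only $s$ and $t$. Consequently at most one tree-going path may be selected per child of $s$, and the direct edge $\{s,t\}$ (if present) contributes one further candidate that is internally disjoint from every tree-going path.

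For each child $c$ of $s$, I would compute in polynomial time the set $C(c)$ of colors \emph{achievable} through the subtree $T_c$, i.e.\ the set of $\kappa$ such that some $v \in T_c$ with $\{v,t\} \in E$ satisfies $\kappa \in f_C(\{v,t\})$ and $\kappa$ lies in every edge-color-set along the tree path from $s$ to $v$. A single DFS of $T_c$ suffices: at each vertex $v$ maintain the running intersection $S_v = S_{\mathrm{parent}(v)} \cap f_C(\{\mathrm{parent}(v),v\})$, and whenever $\{v,t\} \in E$, add $S_v \cap f_C(\{v,t\})$ to $C(c)$. Set $C_0 := f_C(\{s,t\})$ if $\{s,t\} \in E$, otherwise $C_0 := \emptyset$.

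Finally, I would reduce the problem to maximum bipartite matching between the children of $s$ (together with a dummy vertex for the direct edge) on one side and the colors in $C$ on the other, where $c$ is adjacent to $\kappa$ iff $\kappa \in C(c)$ (and the dummy is adjacent to every $\kappa \in C_0$). A matching of size $k$ yields $k$ pairwise color-disjoint uni-color $st$-paths: each matched pair $(c,\kappa)$ is realized by picking a concrete $v \in T_c$ witnessing $\kappa \in C(c)$ from the DFS, internal vertex-disjointness follows from the one-path-per-subtree lemma, and the matching condition enforces distinct colors. Conversely, any feasible solution projects onto such a matching. Since bipartite matching is polynomial, the overall algorithm is polynomial. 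The main obstacle is justifying the reduction cleanly, especially that colors achievable in different subtrees can indeed be realized simultaneously by internally disjoint paths; this is precisely what the structural observation above guarantees.
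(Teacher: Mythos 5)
Your proof is correct and follows essentially the same route as the paper: both reduce \textsc{MaxCDDP} to maximum bipartite matching between the neighbors (children) of $s$ in the tree $G^{-t}$ and the set of colors, with adjacency recording that some uni-color $st$-path of that color passes through that child's subtree. Your write-up merely adds explicit detail the paper leaves implicit (the DFS with running color-set intersections, and the separate handling of a direct $\{s,t\}$ edge).
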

\begin{proof}

Consider the tree $G^{-t}$ rooted at $s$ and let $N(s)= \{v_1, \dots,  v_p\}$. 
Now, define a bipartite graph $G_B= (V_B, E_B)$ consisting of vertices:
\[
V_B= \{v_i : \{s,y_i\} \in E  \} \cup \{ v'_i : c_i \in C  \}
\]
The set $E_B$ of edges is defined as follows:
\[
E_B= \{ \{ v_i, v'_j\} : \text{there exists a uni-color path from $s$ to $t$ colored $c_j$  that passes through $y_i$}  \}
\]

We show that there exists a matching $M$ in $G_B$ of size $h$ if and only if there exists a solution of \textsc{MaxCDDP} in $G^{-t}$ consisting of $h$ paths.

Consider a matching $M$ in $G_B$ consisting of $h$ edges.
Construct a set $P$ of $h$ paths in $G$ as follows: if $\{ v_i, v'_j\} \in M$ then add in $P$ the path colored $c_j$ from $s$ to $t$ that passes through $y_i$.
First, notice that since the $G^{-t}$ is a tree, the paths associated with edges of the matching are disjoint,
as by construction there exists at most one path that passes through a node $y_i \in N(s)$.
Next, we show that $P$ is a set of color disjoint paths.
By construction each edge of $M$ can be incident in at most one 
vertex $v'_j$. Then there exists at most one path in $P$ having color 
$c_j$, hence $P$ is a set of color disjoint paths.


Consider a set $P$ of $h$ uni-color and color disjoint paths that is a 
solution of \textsc{MaxCDDP}. 
We compute a matching $M$ of $G_B$ as follows: 
given a path $\pi$ of $P$ colored $c_j$  that passes through $y_i$,
add the edge $\{ v_i, v'_j\}$ to $M$.
%
Next, we show that $M$ is a matching of $M$ (obviously $M$ consists of $h$ edges).
Since the paths in $P$ are color disjoint, the following properties hold: (1)
for each $v_i \in V_B$, there exists at most one edge of $M$ incident in $v_i$,
since $G^{-t}$ is a tree, hence there exists at most one path that
passes through $y_i$; (2) for each $v'_j \in V_B$, 
there exists at most one edge of $M$ incident in $v'_j$,
since the paths in $P$ are color disjoint, hence there exists at most
one path in $P$ colored $c_j$.

Therefore, it follows that an optimal solution $P$ of \textsc{MaxCDDP} can be computed by first finding a maximum matching $M$ of $G_B$ and then assigning to $P$ those paths whose corresponding edges belong to $M$.
\qed
\end{proof}

}

\subsection{Complexity of \textsc{MaxCDDP} on Graphs at Distance Two from Disjoint Paths} 

In this section, we show that if the input graph $G$ has distance two from a set of disjoint paths, 
then \textsc{MaxCDDP} is NP-hard. 

We give a reduction from Maximum Independent Set on Cubic graphs (\textsc{MaxISC}).
We recall that a graph is cubic when each of its vertices has degree $3$. 
Moreover, we give the definition of the \textsc{MaxISC} problem:

\problemopt{\textsc{Maximum Independent Set on Cubic graphs (MaxISC)}}{a cubic graph $G_I=(V_I,E_I)$.}{a subset $V'_I \subseteq V_I$ of maximum cardinality,
such that for each $v_x, v_y  \in V'_I$ it holds $\{ v_x, v_y \} \notin E$}


In this section we give a polynomial reduction from \textsc{MaxISC} to \textsc{MaxCDDP}. 
We build a graph $G=(V,E,f_C)$ (input of \textsc{MaxCDDP}) 
from $G_I=(V_I,E_I)$, by defining a gadget $GV_i$ for each vertex $v_i \in V_I$,
and connecting the gadget to vertices $s$ and $t$.

Given $v_i \in V_I$, define a gadget $GV_i$ consisting of a set $V_i$ of $4$ vertices 
(see Figure~\ref{fig:RedBoundTreewidth}):
\[
V_i =  \{ v'_i, v'_{i,j}: v_i \in V_I, 1 \leq j \leq 3 \} 
\]


Moreover, define the set $C$ of colors as follows:
\[
C = \{ c_i : v_i \in V_I \} \cup \{ c_{i,j}: \{ v_i,v_j \} \in E_I \}
\]

We assume that, given a vertex $v_i$, the vertices adjacent to $v_i$ (that is the vertices in $N(v_i)$) are ordered,
i.e. if $v_j,v_h,v_z \in N(v_i)$ with $1 \leq j \leq h \leq z$, then $v_j$ is the first vertex adjacent
to $v_i$, $v_h$ is the second and $v_z$ is the third.

We define the edges of $G$ and their colors by means of the following colored paths:
\begin{itemize}
\item a path colored $c_i$ that consists of $s$, $v'_i$, $v'_{i,1}$, $v'_{i,2}$, $v'_{i,3}$, $t$, with $1 \leq i \leq |V_I|$
\item if, according to the ordering, $v_j$ is the $p$-th vertex incident on $v_i$, $1 \leq p \leq 3$, then there exists a path
colored $c_{i,j}$ that passes through $s$, $v'_{i,p}$, $t$
\end{itemize}

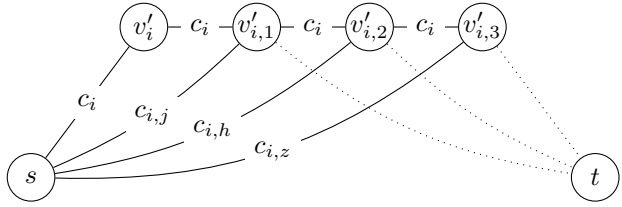
\begin{figure}[ht!]
\centering

		\begin{tikzpicture} 

\node[vertex,inner sep=-5,minimum size=18pt] (s) at (-0.5,-1)	{$s$};

\node[vertex,inner sep=-5,minimum size=18pt] (v0) at (1,1)	{$v'_i$};

\node[vertex,inner sep=-5] (v1) at (2.5,1)	{$v'_{i,1}$};

\node[vertex,inner sep=-5] (v2) at (4,1)	{$v'_{i,2}$};

\node[vertex,inner sep=-5] (v3) at (5.5,1)	{$v'_{i,3}$};

\node[vertex,inner sep=-5,minimum size=18pt] (t) at (7,-1)	{$t$};

\draw (s) edge[] node[midway,fill=white] {$c_i$} (v0);
\draw (v0) edge[] node[midway,fill=white] {$c_i$} (v1);
\draw (v1) edge[] node[midway,fill=white] {$c_i$} (v2);
\draw (v2) edge[] node[midway,fill=white] {$c_i$} (v3);
\draw (s) edge[bend right=10] node[midway,fill=white] {$c_{i,j}$} (v1);
\draw (s) edge[bend right=15] node[midway,fill=white] {$c_{i,h}$} (v2);
\draw (s) edge[bend right=20] node[midway,fill=white] {$c_{i,z}$} (v3);
\draw (v3) edge[dotted] (t);
\draw (v2) edge[dotted, bend right=10] (t);
\draw (v1) edge[dotted, bend right=15] (t);

\end{tikzpicture}
\caption{Gadget $GV_i$ associated with vertex $v_i \in V_I$. Vertices $v_j$, $v_h$, $v_z$ are three vertices of $V_I$, 
with $N(v_i)=\{ v_j, v_h, v_z \}$ and $j < h < z$. $v_j$ is the first vertex adjacent to $v_i$ in $G_I$ and thus there 
exists a path in $G$ colored by $c_{i,j}$ that passes through $s$, $v'_{i,1}$, $t$;
$v_h$ is the second vertex adjacent to $v_i$ in $G_I$ (hence there exists a path in $G$ colored by $c_{i,h}$ that passes through $s$, $v'_{i,2}$, $t$);
$v_z$ is the third vertex adjacent to $v_i$ in $G_I$ (hence there exists a path in $G$ colored by $c_{i,z}$ that passes through $s$, $v'_{i,3}$, $t$).
}
\label{fig:RedBoundTreewidth}

\end{figure}

First, we prove that the graph $G$ has indeed the desired property, 
that is it has distance two from disjoint paths.

\begin{lemma}
\label{Lem:BoundTr1}
Given a cubic graph $G_I$, let $G$ be the corresponding graph input of \textsc{MaxCDDP}.
Then $G$ is at distance two from disjoint paths.
\end{lemma}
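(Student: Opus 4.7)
The plan is to exhibit two vertices whose removal turns $G$ into a disjoint union of paths, and the natural candidates are the two distinguished vertices $s$ and $t$ themselves. Accordingly, I would consider $G' = G - \{s,t\}$ and argue directly that its components are paths, one per gadget.

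First I would observe that all connections between distinct gadgets $GV_i$ and $GV_{i'}$ go through $s$ or $t$: inspecting the construction, every edge of $G$ arises either from the uni-color path $s, v'_i, v'_{i,1}, v'_{i,2}, v'_{i,3}, t$ (whose endpoints and intermediate vertices all lie in $GV_i \cup \{s,t\}$), or from one of the short paths $s, v'_{i,p}, t$ associated with an incidence of $v_i$ in $G_I$ (again contained in $GV_i \cup \{s,t\}$). Hence no edge of $G$ has endpoints in two different gadgets without passing through $\{s,t\}$, so the components of $G'$ are contained in individual gadgets.

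Next I would check that within a single gadget the restriction $GV_i \cap G'$ is itself a path. The only edges among the vertices $\{v'_i, v'_{i,1}, v'_{i,2}, v'_{i,3}\}$ come from the $c_i$-path, yielding exactly $v'_iv'_{i,1}$, $v'_{i,1}v'_{i,2}$, and $v'_{i,2}v'_{i,3}$; every other edge of the gadget (the chords $sv'_{i,p}$ from the $c_{i,j}$, $c_{i,h}$, $c_{i,z}$-paths, the connections $v'_{i,p}t$, and the endpoint edges $sv'_i$, $v'_{i,3}t$ of the $c_i$-path) is incident to $s$ or $t$ and thus disappears in $G'$. Therefore each component of $G'$ is the $4$-vertex path $v'_i - v'_{i,1} - v'_{i,2} - v'_{i,3}$.

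Combining these two observations, $G' = G - \{s,t\}$ is a disjoint union of $|V_I|$ paths, which by definition means $G$ is at distance at most two from disjoint paths. The only real obstacle is a careful case analysis to ensure that no edge of $G$ has been overlooked (in particular, no implicit edge introduced by the colored-path description connects distinct gadgets or creates a non-path structure inside a gadget); this is routine from the explicit list of uni-color paths defining $E$.
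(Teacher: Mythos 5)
Your proof is correct and follows essentially the same route as the paper: remove $s$ and $t$ and observe that the only surviving edges are those of the $c_i$-colored subpaths $v'_i$--$v'_{i,1}$--$v'_{i,2}$--$v'_{i,3}$, which form pairwise disjoint paths. Your write-up is simply a more detailed version of the paper's two-line argument, with an explicit check that no inter-gadget edges survive.
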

{
\begin{proof}
After the removal of $s$ and $t$, the paths left in the resulting graph are the paths colored by $c_i$, with $1 \leq i \leq |V_I|$, that pass through $v'_i$, $v'_{i,1}$, $v'_{i,2}$, $v'_{i,3}$. 
Since these paths are pairwise vertex disjoint, the lemma holds.
%
%
%
%
\qed
\end{proof}
}
%

Next, we prove the main results of the reduction.

\begin{lemma}
\label{Lem:BoundTrRed1}
Let $G_I$ be a cubic graph and $G$ be the corresponding graph input of \textsc{MaxCDDP}.
Given an independent set $V'_I$ of $G_I$, we can compute in polynomial time $|E|+|V'_I|$ 
disjoint uni-color color paths
in $G$.
\end{lemma}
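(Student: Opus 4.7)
The plan is to exhibit an explicit family of $|V'_I| + |E_I|$ uni-color $st$-paths that are pairwise internally vertex-disjoint and pairwise color-disjoint (I read the $|E|$ in the statement as $|E_I|$, the number of edges of the input cubic graph). For every $v_i \in V'_I$, I select the ``long'' uni-color path $\pi_i$ colored $c_i$ that traverses $s, v'_i, v'_{i,1}, v'_{i,2}, v'_{i,3}, t$. For every edge $e = \{v_i, v_j\} \in E_I$, I use the independence of $V'_I$ to pick an endpoint outside $V'_I$ --- say $v_i \notin V'_I$, breaking ties by index if both endpoints qualify --- and, letting $p$ be the position of $v_j$ in the fixed ordering of $N(v_i)$, I select the ``short'' path $\pi_e$ colored $c_{i,j}$ given by $s, v'_{i,p}, t$. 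Both the endpoint choice and the look-up in the gadget are polynomial time.

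The internal vertex-disjointness check splits into three cases. Long paths for distinct vertices of $V'_I$ lie in pairwise disjoint gadgets $GV_i$, so any two of them share no internal vertex. A short path $\pi_e$ is routed through the gadget of an endpoint \emph{not} in $V'_I$, while long paths occupy only gadgets of vertices of $V'_I$; hence long and short paths never collide. Finally, two distinct short paths $\pi_e$ and $\pi_{e'}$ routed through the same gadget $GV_i$ pass through $v'_{i,p}$ and $v'_{i,p'}$ with $p \neq p'$, since these positions correspond to two different neighbors of $v_i$ in the fixed ordering of $N(v_i)$. Color-disjointness is immediate: the long paths carry pairwise distinct colors $c_i$, the short paths carry pairwise distinct colors $c_{i,j}$ (each $c_{i,j}$ being attached to a unique edge of $G_I$), and the two color palettes are disjoint.

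The count is then one long path per vertex of $V'_I$ plus one short path per edge of $E_I$, giving $|V'_I| + |E_I|$ color-disjoint uni-color $st$-paths, as required. I expect the only mildly delicate point to be the last case of the vertex-disjointness check --- two short paths that happen to be routed through the same gadget --- and the ``$p$-th neighbor'' ordering built into the construction is exactly what guarantees they use distinct ports and therefore do not conflict.
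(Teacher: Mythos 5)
Your proposal is correct and follows essentially the same construction as the paper: one long $c_i$-colored path per vertex of the independent set, one short $c_{i,j}$-colored path per edge routed through the gadget of an endpoint outside $V'_I$, with the same disjointness argument (your three-case check is in fact a slightly more explicit write-up of the paper's reasoning, and your reading of $|E|$ as $|E_I|$ matches the intended meaning).
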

\begin{proof}
Consider an independent set $V'_I \subseteq V_I$ of $G_I$, we define
a set $\mathcal{P}$ of uni-color disjoint paths as follows:

\begin{itemize}

\item for each $v_i \in V'_I$, $\mathcal{P}$ contains  a path $s,v'_i, v'_{i,1}, v'_{i,2}, v'_{i,3}, t$ colored by $c_i$

\item for each $ \{ v_i,v_j \} \in E_I$, there exists at least one of $v_i$, $v_j$ in $V_I \setminus V'_I$,
we pick one of $v_i$, $v_j$ in $V_I \setminus V'_I$, assume w.l.o.g. that $v_i \in V_I \setminus V'_I$ and that 
$v_j$ is the $h$-th vertex, $1 \leq h \leq 3$, adjacent to $v_i$, then 
$\mathcal{P}$ contains the path $s, v'_{i,h}, t$  colored by $c_{i,j}$

\end{itemize}

%

Since $V'_I$ is an independent set, there exists at most one path colored by $c_i$ in $\mathcal{P}$.
Moreover, notice that by construction, for each $ \{ v_i,v_j \} \in E_I$, there exists
exactly one path colored by $c_{i,j}$ in $\mathcal{P}$.
Thus, in order to prove that $\mathcal{P}$ is color disjoint, we have to prove that
the paths in $\mathcal{P}$ are  internally disjoint.
Indeed, since $V'_I$ is an independent set, a path colored by $c_i$ is the only path that passes 
though vertices $s,v'_i, v'_{i,1}, v'_{i,2}, v'_{i,3}, t$.
Furthermore, consider a path $\pi$ colored by $c_{i,j}$. 
Then $\pi$ passes through vertices 
$s$, $v'_{i,h}$, $t$, where $v_i \notin V'_I$, and by construction there is no other path
that passes through $v'_{i,h}$.
\qed
\end{proof}

\begin{lemma}
\label{Lem:BoundTrRed2}
Let $G_I$ be a cubic graph and $G$ be the corresponding graph input of \textsc{MaxCDDP}.
Given $|E|+t$ color disjoint uni-color paths
in $G$, we can compute in polynomial time an independent set of size $t$ for $G_I$.
\end{lemma}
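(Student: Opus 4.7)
My plan is to give the converse of Lemma~\ref{Lem:BoundTrRed1}. The first step is to enumerate the possible uni-color $st$-paths of $G$. A direct inspection of the colored edges of the reduction shows that for each vertex-color $c_i$ the unique uni-color $st$-path is the long path $s, v'_i, v'_{i,1}, v'_{i,2}, v'_{i,3}, t$ through the gadget $GV_i$; for each edge-color $c_{i,j}$, with $v_j$ the $p$-th neighbor of $v_i$ and $v_i$ the $q$-th neighbor of $v_j$, there are exactly two uni-color $st$-paths, namely the short paths $s, v'_{i,p}, t$ and $s, v'_{j,q}, t$.

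Given a set $\mathcal{P}$ of $|E_I| + t$ color-disjoint uni-color paths, I would then define
\[
V'_I = \{ v_i \in V_I : \mathcal{P} \text{ contains the long path of color } c_i \}.
\]
The key observation I will exploit is that whenever $\{v_i, v_j\} \in E_I$ and both $v_i, v_j \in V'_I$, both short paths carrying color $c_{i,j}$ are blocked: the first uses $v'_{i,p}$, which is internal to the long $c_i$-path, while the second uses $v'_{j,q}$, which is internal to the long $c_j$-path. Hence $c_{i,j}$ cannot appear in $\mathcal{P}$. Letting $k$ denote the number of edges of $G_I$ with both endpoints in $V'_I$, this yields
\[
|E_I| + t = |\mathcal{P}| \leq |V'_I| + (|E_I| - k),
\]
so $|V'_I| - k \geq t$.

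To finish, I would repair $V'_I$ into an independent set of $G_I$ by deleting one endpoint from each of the $k$ intra-$V'_I$ edges. This removes at most $k$ vertices, leaving an independent set of size at least $|V'_I| - k \geq t$, and every step is clearly computable in polynomial time.

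The main obstacle I anticipate is noticing that the natural candidate $V'_I$ need not itself be independent, so the slack $|V'_I| - k \geq t$, which comes entirely from the edge-colors that are forced to be missing from $\mathcal{P}$, has to be converted back into a vertex deletion. Once the enumeration of uni-color $st$-paths is in hand and the ``two-candidate short paths per edge-color'' structure is identified, the counting argument falls out directly.
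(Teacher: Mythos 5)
Your proof is correct, but it takes a genuinely different route from the paper's. The paper first normalizes the solution via an exchange argument: it shows that one may assume $\mathcal{P}$ contains a path of color $c_{i,j}$ for \emph{every} edge $\{v_i,v_j\}\in E_I$ (if some $c_{i,j}$ is missing, a long $c_i$-path is swapped for a short $c_{i,j}$-path without decreasing $|\mathcal{P}|$, using the fact that the only two paths through $v'_{i,p}$ are the long $c_i$-path and that short path). After this normalization the set $V'_I$ of vertices whose long path lies in $\mathcal{P}$ is \emph{already} independent, since an intra-$V'_I$ edge would block both short paths of its color and contradict the assumption. You instead skip the normalization and run a direct counting argument: with $k$ denoting the number of intra-$V'_I$ edges, color-disjointness caps $|\mathcal{P}|$ at $|V'_I|+(|E_I|-k)$, so $|V'_I|-k\geq t$, and you repair $V'_I$ into an independent set by deleting one endpoint per bad edge. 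Both arguments rest on the same two structural facts (each vertex color supports exactly one long $st$-path, each edge color exactly two short ones, and a long path blocks the short path inside its gadget), but yours trades the paper's slightly delicate exchange step --- which requires checking that the swapped-in short path conflicts with nothing else in $\mathcal{P}$ --- for an extra greedy deletion step at the end; the paper's version yields the independent set directly with $|V'_I|=t$, while yours yields one of size at least $t$, which is all the lemma needs.
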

{
\begin{proof}
Consider a solution $\mathcal{P}$ of the instance of \textsc{MaxCDDP} consisting of $|E|+t$ color disjoint  
uni-color paths.
First, we show that we can assume that $\mathcal{P}$ contains, for each color $c_{i,j}$, a path colored 
by $c_{i,j}$.
Assume this is not the case. Then, we can replace a path colored by $c_i$  
with a path $p'$ colored by $c_{i,j}$ that passes through the vertices of gadget 
$VG_i$, without decreasing the number of path in $\mathcal{P}$. 
Indeed, by the property of $G_I$, there exists only two uni-color paths that passes through 
vertex $v'_{i,h}$, with $1 \leq h \leq 3$, where $\{ v_i,v_j \}$ is the $h$-th edge incident
on $v_i$: a path colored by  $c_i$ that passes through $s,v'_i, v'_{i,1}, v'_{i,2}, v'_{i,3}, t$
and a path $s, v'_{i,h}, t$  colored by $c_{i,j}$.
Hence by replacing a path color $c_i$ with $p'$, the set $\mathcal{P}$ 
$|E|+t$ color disjoint  uni-color paths.

Now, starting from $\mathcal{P}$, we can compute an independent set $V'_I$ as follows.
If $\mathcal{P}$ contains a path $s,v'_i, v'_{i,1}, v'_{i,2}, v'_{i,3}, t$ colored by $c_i$,
then $v_i \in V'_I$. 
Notice that $V'_I$ is an independent set, since, if $v_i$, $v_j$, with $\{ v_i,v_j \} \in E$, 
are both in $V'_I$, this implies that there is no path colored by $c_{i,j}$ in $\mathcal{P}$,
contradicting our assumption.
\qed
\end{proof}
}

Hence, we can prove the NP-hardness of \textsc{MaxCDDP} on graphs at distance two from disjoint paths. 

\begin{theorem} 
\label{th:hardnesstw}
\textsc{MaxCDDP} is NP-hard, even if the graph $G$ has distance two from disjoint paths.
\end{theorem}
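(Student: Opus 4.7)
The plan is to simply package the reduction and the three lemmas (Lemma~\ref{Lem:BoundTr1}, Lemma~\ref{Lem:BoundTrRed1}, Lemma~\ref{Lem:BoundTrRed2}) that have already been established into a complete NP-hardness argument. The reduction from \textsc{MaxISC} to \textsc{MaxCDDP} built above clearly runs in polynomial time: given a cubic graph $G_I = (V_I,E_I)$, we create $O(|V_I|)$ vertices (four per gadget, plus $s$ and $t$) and $O(|V_I| + |E_I|)$ edges, with $O(|V_I| + |E_I|)$ colors. Since \textsc{MaxISC} is known to be NP-hard, it suffices to show that this polynomial reduction correctly translates optimal values.

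First I would invoke Lemma~\ref{Lem:BoundTr1} to assert that the constructed graph $G$ is at distance at most two from a set of disjoint paths, which is the structural property required by the theorem statement. Then I would combine Lemma~\ref{Lem:BoundTrRed1} and Lemma~\ref{Lem:BoundTrRed2} to obtain the equivalence: $G_I$ admits an independent set of size $t$ if and only if $G$ admits $|E_I| + t$ color-disjoint uni-color $st$-paths. Concretely, Lemma~\ref{Lem:BoundTrRed1} gives one direction (independent set $\Rightarrow$ paths), while Lemma~\ref{Lem:BoundTrRed2} gives the converse (paths $\Rightarrow$ independent set).

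From this equivalence, a polynomial-time algorithm deciding whether \textsc{MaxCDDP} on graphs at distance two from disjoint paths admits at least $|E_I| + t$ color-disjoint uni-color $st$-paths would yield a polynomial-time algorithm deciding whether $G_I$ has an independent set of size $t$. Since the latter is NP-hard on cubic graphs, \textsc{MaxCDDP} is NP-hard even when restricted to inputs at distance two from disjoint paths, which proves the theorem.

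There is no real obstacle here, since all of the combinatorial work has been carried out in the preceding lemmas; the proof of the theorem is essentially a one-line composition. The only thing to be mindful of is stating the correspondence with the correct additive offset $|E_I|$ (coming from the mandatory inclusion of one path per edge-color $c_{i,j}$ in any maximum solution, as established in Lemma~\ref{Lem:BoundTrRed2}), so that ``optimum of \textsc{MaxCDDP}$= |E_I| + \text{optimum of \textsc{MaxISC}}$'' is stated precisely.
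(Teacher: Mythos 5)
Your proposal is correct and matches the paper's own proof, which likewise concludes NP-hardness by citing the NP-hardness of \textsc{MaxISC} and composing Lemma~\ref{Lem:BoundTr1}, Lemma~\ref{Lem:BoundTrRed1} and Lemma~\ref{Lem:BoundTrRed2}. Your added remark making the additive offset $|E_I|$ explicit is a harmless (and slightly more careful) restatement of what the lemmas already provide.
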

{
\begin{proof}
 \textsc{MaxISC} is NP-hard~\cite{DBLP:journals/tcs/AlimontiK00}.
Hence Lemma~\ref{Lem:BoundTr1}, Lemma~\ref{Lem:BoundTrRed1} and Lemma~\ref{Lem:BoundTrRed2} imply that \textsc{MaxCDDP} is NP-hard, even if the graph $G$ 
has distance two from disjoint paths.
\qed
\end{proof}
}

The previous result implies that \textsc{MaxCDDP} cannot be solved in $n^{f(d)}$ time unless P=NP (it is not in the class XP), where $d$ is the distance to disjoint paths of $G$, but also for ``stronger'' parameters like pathwidth or treewidth~\cite{DBLP:conf/mfcs/KomusiewiczN12}.

\begin{corollary}
\label{corollary:hardnesspw}
\textsc{MaxCDDP} is NP-hard, even if the input graph $G$ has pathwidth and treewidth bounded by $2$. 
\end{corollary}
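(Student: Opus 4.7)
My plan is to derive the corollary as an immediate consequence of Theorem~\ref{th:hardnesstw}. That theorem establishes NP-hardness on the class of graphs $G$ produced by the reduction, which are at distance exactly two from a disjoint union of paths: removing $s$ and $t$ leaves, for each $v_i \in V_I$, the path $v'_i, v'_{i,1}, v'_{i,2}, v'_{i,3}$, and these are pairwise vertex-disjoint, as already verified in Lemma~\ref{Lem:BoundTr1}. So what remains is to observe that every graph at constant distance from disjoint paths has pathwidth and treewidth bounded by a small constant, and then transfer the hardness.

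The key tool is the standard structural inequality: if $S \subseteq V(G)$ is a set of $d$ vertices whose deletion leaves a graph of pathwidth (resp.\ treewidth) $w$, then $G$ itself has pathwidth (resp.\ treewidth) at most $w + d$. The construction is elementary: take any path/tree decomposition of $G - S$ and add all of $S$ to every bag; this remains a valid decomposition because every edge not already covered has at least one endpoint in $S$ and is therefore contained in every bag. I would first note that a disjoint union of paths has pathwidth $1$ — slide a window of two consecutive vertices along each component and concatenate the resulting sequences of bags across components — and then apply the inequality with $S = \{s,t\}$ and $d = 2$ to the output graphs of Theorem~\ref{th:hardnesstw}. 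This produces decompositions of bounded width for both parameters, matching the claim of the corollary. Since pathwidth upper-bounds treewidth, a single path-decomposition argument handles both.

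I do not anticipate any substantive obstacle here, since the argument reduces the corollary to a well-known structural observation combined with a direct consequence of the previous theorem. The only point requiring any care is the numerical bookkeeping to obtain the stated constant bound; this is precisely the parameter hierarchy captured in the Komusiewicz--Niedermeier reference cited immediately before the corollary, which guarantees that distance-to-disjoint-paths dominates both pathwidth and treewidth up to a small additive term. Once this is in place, transferring NP-hardness from the smaller graph class (distance two from disjoint paths) to the larger one (bounded pathwidth, bounded treewidth) is automatic, since every hard instance of the former is also an instance of the latter.
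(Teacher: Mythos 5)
Your route is the same as the paper's: the paper gives no explicit proof of this corollary either, and simply transfers the hardness of Theorem~\ref{th:hardnesstw} via the observation (citing Komusiewicz and Niedermeier) that distance to disjoint paths upper-bounds pathwidth and treewidth up to an additive term; your ``add $\{s,t\}$ to every bag of a width-$1$ path decomposition of $G-\{s,t\}$'' argument is exactly the standard justification of that step. The one issue is precisely the ``numerical bookkeeping'' you defer: your argument yields pathwidth at most $1+2=3$, not $2$ as the corollary states, and the stated constant $2$ is in fact unattainable for these graphs. Inside a single gadget, each of $v'_{i,1},v'_{i,2},v'_{i,3}$ is adjacent to both $s$ and $t$, so the branch sets $\{s\}$, $\{t,v'_{i,3}\}$, $\{v'_{i,1}\}$, $\{v'_{i,2}\}$ form a $K_4$ minor and the treewidth is at least $3$. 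So your proof is sound but establishes the corollary only with the bound corrected to $3$; this appears to be an off-by-one error in the paper's statement rather than a flaw in your approach, but you should not expect the deferred bookkeeping to close the gap down to $2$.
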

%

%

\subsection{A Polynomial-Time Algorithm for \textsc{MaxCDP} on Graphs at Constant Distance from Disjoint Paths}
\label{sec:MaxCDPConstDisjoint}

In this section, we show that, contrary to \textsc{MaxCDDP}, \textsc{MaxCDP} is polynomial-time solvable when the input graph $G$ has distance bounded by a constant $d$ from a set $\mathcal{P}$ of disjoint paths (that is, it is in the class XP for the parameter distance to disjoint paths).

Next, we present the algorithm. 
Notice that we assume that a set $X \subseteq V$ is given, such that after the removal of $X \cup \{s,t\}$ the resulting graph
consists of a set $\mathcal{P}$ of disjoint paths \footnote{Notice that, since $|X| \leq d$, $X$ can be computed in time $O(n^d)$ .}.
We assume that $X$ and $\mathcal{P}$ are defined so that
$s,t \notin X$ and that no path in $\mathcal{P}$ contains $s$ and $t$.
Denote by $V(\mathcal{P})$ the set of vertices that belong to a path of $\mathcal{P}$, 
it holds $V= V(\mathcal{P}) \cup X \cup \{s,t\}$.

Since $G$ has distance $d$, where $d>0$ is constant, from the set of disjoint paths $\mathcal{P}$, it follows that
$|X|\leq d$. 
Let $\mathcal{P'} =  \{p_1, \dots, p_b \}$, with $1 \leq i \leq b \leq d$,  such that
$V(\mathcal{P'}) \subseteq V$
is the set of paths of an optimal solution of \textsc{MaxCDP} such that
$p_i$ contains a non-empty subset of $X$.

The algorithm computes each $p_i$, with $1 \leq i \leq b$, by iterating through subpaths of size at most $d$ in $\mathcal{P}$ and a subset of $X$.
More precisely, $p_i$ is computed as follows. 
Each path $p_i$ contains at most $d+1$ disjoint subpaths that belong to paths in $\mathcal{P}$, that are connected
through a subset of at most $d$ vertices of $X$. 
In time $O(n^{2(d+1)})$, we compute the at most $d+1$ disjoint subpaths $p_x[j_1,j_2]$ of $P_x \in \mathcal{P}$ that belong to $p_i$; in time $O(2^d)$ we compute the subset $X_i \subseteq X$ that belong to each $p_i$.
Let $V_i=V(p_i) \cup X_i$, that is the set of vertices that belong to $p_i$ and to subset $X_i$.
Notice that the subsets $V_i$, with $1 \leq i \leq b$, are computed so that they are pairwise disjoint.

The algorithm computes in polynomial time if there exists a uni-color path from $s$ to $t$ that passes through the vertices $V_i$.
If for each $i$ with $1 \leq i \leq b$ such a path exists, then the algorithm computes the
maximum number of uni-color disjoint paths in the subgraph $G'$ of $G$ induced by 
$V' = V \setminus \bigcup_{i=1}^{b} V_i$. Notice that, since $V' \cap X = \emptyset$, it follows
that, if we remove $s$ and $t$ from $V'$, $G'$ consists of a set of disjoint paths $\{p'_1, \dots, p'_r\}$.
The maximum number of uni-color disjoint paths in the subgraph $G'$ can be computed in polynomial
time, as shown in the following lemma.


\begin{lemma} 
\label{lem:MaxCDPDisPath}
Let $G=(V,E,f_c)$ be an edge colored graph such that $V^* = V \setminus \{ s,t \}$ induces
a set of disjoint paths. Then \textsc{MaxCDP} on $G$ can be solved in polynomial time.
\end{lemma}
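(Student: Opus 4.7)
The plan is to exploit the extremely restricted structure of $G$. Since $V \setminus \{s,t\}$ induces a disjoint union of paths $\mathcal{P} = \{P_1, \dots, P_r\}$, any $st$-path in $G$ must, after leaving $s$, enter some $P_j$, traverse a contiguous subpath of $P_j$, and then go to $t$ (the only alternative is to use the edge $\{s,t\}$ directly, if present). Indeed, once inside a path $P_j$, the only way to reach a different $P_{j'}$ is through $s$ or $t$, which are endpoints of the $st$-path and hence can be used only once. Consequently, each uni-color $st$-path has the form $s,v_a,v_{a+1},\dots,v_b,t$ where $v_a,\dots,v_b$ is a contiguous segment of a single $P_j \in \mathcal{P}$, or it is the single edge $\{s,t\}$.

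Because the paths $P_1,\dots,P_r$ are pairwise vertex-disjoint, two $st$-paths using different $P_j$'s are automatically internally disjoint. Hence the problem splits into independent subproblems, one per $P_j$: compute the maximum number of pairwise vertex-disjoint uni-color $st$-paths whose internal vertices lie on $P_j$. The final answer is the sum of these values, plus one if $\{s,t\} \in E$.

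Fix $P_j$ with vertex sequence $v_1,\dots,v_{n_j}$. For each pair $(a,b)$ with $a \leq b$ and each color $c \in C$, the segment $v_a,\dots,v_b$ yields a valid uni-color $st$-path colored $c$ iff $c \in f_C(\{s,v_a\}) \cap f_C(\{v_b,t\}) \cap \bigcap_{a \leq i < b} f_C(\{v_i,v_{i+1}\})$. This condition can be checked for all $(a,b,c)$ in polynomial time by sweeping along $P_j$ and maintaining, for each starting index $a$, the running intersection of edge-color sets. We then define a Boolean indicator $\mathrm{valid}(a,b)$ that is true iff some $c$ makes $v_a,\dots,v_b$ a valid segment.

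Given these indicators, the maximum number of pairwise vertex-disjoint valid segments on $P_j$ is a standard interval-selection problem on the line $1,\dots,n_j$ and can be solved by the dynamic program $f(i) = \max\{f(i-1),\ \max_{a \leq i,\ \mathrm{valid}(a,i)} 1 + f(a-1)\}$ with $f(0)=0$. Summing $f(n_j)$ over $j$ and adding the optional direct $\{s,t\}$ edge gives the optimum. Every step runs in polynomial time; the only point requiring care is ensuring that the uni-color constraint really can be verified segment by segment, which is straightforward from the definition of $f_C$ and the observation that each candidate $st$-path lies entirely within one $P_j$.
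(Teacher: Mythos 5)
Your proof is correct and reaches the conclusion by a genuinely different route from the paper. Both arguments start from the same two observations: every uni-color $st$-path is $s$, followed by a contiguous segment of a single $P_j$, followed by $t$ (or the bare edge $\{s,t\}$), and since distinct $P_j$'s share no vertices the problem decomposes into one independent subproblem per $P_j$. They diverge in how each subproblem is solved. The paper repeatedly extracts a \emph{shortest} uni-color $st$-path inside $P_j$, deletes its internal vertices, and iterates; optimality rests on an exchange argument claiming that no internal vertex of a shortest such path is adjacent to $s$ or $t$, so any optimal solution meeting that path contains a path whose segment encloses it and can be swapped out. You instead enumerate all valid segments explicitly (via running intersections of edge-color sets) and then solve a maximum set of pairwise disjoint intervals by the standard interval-scheduling dynamic program. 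Your version costs a little more to set up, but its correctness is immediate once the segment characterization is in place; it entirely avoids the delicate point in the paper's exchange step, namely that an internal vertex of a shortest path may well be adjacent to $s$ or $t$ by an edge of a \emph{different} color, and that equally short, partially overlapping segments of different colors make the ``pick any shortest'' rule sensitive to tie-breaking. One small patch to your write-up: a segment can be traversed in either direction, so $\mathrm{valid}(a,b)$ should also test the reversed orientation, i.e.\ whether some $c$ lies in $f_C(\{s,v_b\}) \cap f_C(\{v_a,t\}) \cap \bigcap_{a \leq i < b} f_C(\{v_i,v_{i+1}\})$; nothing else in your argument changes.
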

{
\begin{proof}
Let $P = \{ p'_1, \dots , p'_r \}$ be the set of disjoint paths induced by $V^*$. 
Since there is no $st$-path in $G$ containing a vertex of $p'_i$ and a vertex of $p'_j$, with $i \neq j$,
we compute a solution of \textsc{MaxCDP} independently on each path $p'_i$.
Let $\mathcal{P}_i$ be the set of uni-color $st$-paths that contains only vertices of $p'_i$.
For each $i$ with $1 \leq i \leq r$, we compute a shortest uni-color $st$-path $p$ that contains only vertices of $p'_i$,
we add it to $\mathcal{P}_i$, and we remove the vertices of $p$ from $p'_i$.
We iterate this procedure, until there exists no $st$-path that contains only vertices of $p'_i$.

We claim that $\mathcal{P}_i$ is a set of uni-color $st$-paths of maximum size.
Consider a shortest path $p$ added to $\mathcal{P}_i$. Let $x$ be the vertex of $p$ adjacent to $s$
and $y$ be the vertex of $p$ adjacent to $t$. Notice that each vertex in $p$, except for $x$ and $y$,
is not connected to $s$ or $t$, otherwise $p$ would not be a shortest path between $s$ and $t$.
Now, assume that there is an optimal solution $\mathcal{Q}$ of \textsc{MaxCDP} that does not contain $p$ and that, moreover,
contains an $st$-path that passes through some vertex of $p$, otherwise we can add $p$ to $\mathcal{Q}$
and $\mathcal{Q}$ would not be optimal. Then by construction,
since $P$ is a set of disjoint paths, $\mathcal{Q}$ must contain a path $p'$ that contains $p$ as a subpath.
But then we can replace $p'$ with $p$ in $\mathcal{Q}$, without decreasing the size of the optimal solution.
\qed
\end{proof}
}

Now, we give the main result of this section.

\begin{theorem} 
\label{teo:AlgoXPDisjointPath}
\textsc{MaxCDP} is in XP when the distance to disjoint paths is bounded by a constant.
\end{theorem}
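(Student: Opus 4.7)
The plan is essentially to formalize the algorithm already sketched in the paragraphs preceding the theorem, and then to bound its running time as a polynomial in $n$ (with exponent depending on $d$). The starting observation is that, since the distance to disjoint paths is at most $d$, a suitable vertex set $X$ of size at most $d$ can be found by brute force in time $O(n^d)$, and after removing $X \cup \{s,t\}$ the remaining graph is a disjoint union of simple paths $\mathcal{P}=\{P_1,\dots,P_q\}$.

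Next, I would structure the optimal solution as follows. Let $\mathcal{P}'=\{p_1,\dots,p_b\}$ be the paths of an optimum solution that actually use at least one vertex of $X$. Since the $p_i$ are internally vertex-disjoint and $|X|\le d$, we have $b\le d$. Moreover, each $p_i$ enters and leaves the set $V(\mathcal{P})$ at most $|X_i|+1\le d+1$ times, so $p_i$ decomposes into at most $d+1$ subpaths of some $P_x\in\mathcal{P}$, each specified by its pair of endpoints, glued together through a subset $X_i\subseteq X$. I would enumerate this combinatorial skeleton by guessing, for each of the $b\le d$ paths: the subset $X_i\subseteq X$ (at most $2^d$ choices), the order of the vertices of $X_i$ along $p_i$ (at most $d!$ orderings), and the $\le d+1$ subpath endpoints in $V(\mathcal{P})$ (at most $n^{2(d+1)}$ choices). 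For each skeleton I would then verify in polynomial time whether it can be realized as a uni-color $st$-path, by checking color consistency along each guessed subpath and across each glue edge. This enumeration uses $2^d \cdot d! \cdot n^{2(d+1)d}$ guesses in total, which is polynomial once $d$ is a constant.

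Once a candidate collection $\{p_1,\dots,p_b\}$ has been fixed, let $V_i=V(p_i)\cup X_i$ and set $V'=V\setminus\bigcup_{i=1}^{b}V_i$. Since $V'\cap X=\emptyset$, the induced subgraph $G[V']$ (still containing $s,t$) has the property that $V'\setminus\{s,t\}$ is a disjoint union of paths. Hence I can invoke Lemma~\ref{lem:MaxCDPDisPath} to compute, in polynomial time, the maximum number of uni-color $st$-paths in $G[V']$, and add $b$ to this count. The algorithm outputs the maximum over all guesses.

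For correctness I would argue that an optimal solution is exactly a witness for some guess: enumerating $X$ exhaustively ensures the correct $X$ is considered; enumerating $\mathcal{P}'$ exhaustively ensures the correct $X$-touching paths are selected; and the remaining paths of the optimum use only vertices in $V'\setminus\{s,t\}$, so they are exactly what Lemma~\ref{lem:MaxCDPDisPath} optimizes. The main obstacle I anticipate is making precise the enumeration of the ``shapes'' of the paths $p_i$ in a way that is both exhaustive and polynomial: one must be careful that each $p_i$ is a simple path, that the guessed subpaths are pairwise internally disjoint across different $i$'s, and that the color is constant throughout. Once this bookkeeping is carried out carefully, everything else is routine, and the total running time is $n^{O(d)}$, placing \textsc{MaxCDP} in XP for the parameter distance to disjoint paths.
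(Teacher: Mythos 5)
Your proposal follows essentially the same route as the paper: brute-force the deletion set $X$, enumerate the skeletons (subsets $X_i$ plus at most $d+1$ subpath endpoints) of the at most $d$ optimal paths that touch $X$, verify uni-color realizability, and finish by applying Lemma~\ref{lem:MaxCDPDisPath} to the residual graph, taking the maximum over all guesses. Your write-up is in fact somewhat more explicit than the paper's (e.g.\ the $d!$ orderings of $X_i$ and the correctness argument that an optimum is a witness for some guess), but the decomposition and the key lemma are identical.
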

{
\begin{proof}
Notice that for each $i$ with $1 \leq i \leq b \leq d$, we compute the set $V_i$ in time 
$O(2^d n^{2(d+1)})$; hence the $d$ disjoint sets $V_1,\dots, V_b$ are computed in time $O(2^{d^2} n^{2d(d+1)})$.
Since the existence of a uni-color path that passes through the vertices $V_i$ can be computed in polynomial time and 
since  by Lemma~\ref{lem:MaxCDPDisPath} we compute in polynomial time the maximum number of uni-color disjoint paths in the subgraph $G'$, 
the theorem holds.
\qed
\end{proof}
}

\confversion{
\section{FPT Algorithm Parameterized by Vertex Cover for \textsc{MaxCDP}}
\label{sec:ParamVC}
}

\fullversion{
\section{FPT Algorithms Parameterized by Vertex Cover}
\label{sec:ParamVC}
}

\fullversion{In this section we consider the parameterized complexity of \textsc{MaxCDP} 
and \textsc{MaxCDDP} when parameterized by the size of the vertex cover of the input graph. 
First, we show that \textsc{MaxCDP} is FPT when parameterized by the size of the vertex cover of the input graph.
Then, we show that \textsc{MaxCDDP} can be approximated within factor a $\frac{1}{2}$
by a parameterized algorithm, when parameterized by the size of the vertex cover of the input graph.
We start by describing a simple algorithm for \textsc{MaxCDP}
when parameterized by the size of the vertex cover of the input graph.
}

\confversion{
In this section, we will show that \textsc{MaxCDP} is FPT when parameterized by the size of the vertex cover of the input graph.
}

%

\begin{theorem}
\label{teo:MaxCDP-VC}
\textsc{MaxCDP} is in FPT when parameterized by the size of the vertex cover of the input graph.
\end{theorem}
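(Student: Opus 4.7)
The plan is to exploit the rigid structure that a vertex cover of size $k$ imposes on $st$-paths. I first compute a minimum vertex cover $K$ of $G$ in FPT time (via, e.g., the $O^*(1.2738^k)$ algorithm of Chen--Kanj--Xia), and by adding $s,t$ to $K$ if they are not already there I may assume $s,t\in K$ and $|K|\le k+2$. Set $I:=V\setminus K$; by definition of a vertex cover, $I$ is an independent set.

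The key structural observation is that, since $I$ contains no edges, every $st$-path $P$ is determined by its sequence of $K$-vertices, its \emph{$K$-skeleton}: $P$ visits a sequence $s=u_0,u_1,\ldots,u_r,u_{r+1}=t$ of distinct vertices of $K$, and between consecutive $u_j,u_{j+1}$ the path either uses the direct edge $\{u_j,u_{j+1}\}$ or exactly one bridge vertex $v_j\in I$ adjacent to both $u_j$ and $u_{j+1}$ via edges of a common color. In particular every path has length $O(k)$. I would call $P$ \emph{long} if $r\ge 1$ and \emph{short} otherwise (short paths are either the direct edge $\{s,t\}$, if present, or of the form $s$-$v$-$t$ with $v\in I$). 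Since the internal $K$-vertices of distinct long paths in a solution are pairwise disjoint, at most $k$ long paths appear in any solution, while short paths may be numerous but each consumes exactly one $I$-vertex.

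The algorithm now enumerates, in $f(k)$ time, every possible \emph{long-path configuration}: an ordered partition of a subset of $K\setminus\{s,t\}$ into sequences (the internal $K$-skeletons of the long paths) together with a direct-versus-via-$I$ choice for each gap of each long path. For a fixed configuration it remains to (a) pick a color for each long path compatible with all its gaps, (b) assign pairwise distinct $I$-vertices to its via-$I$ gaps so that each assigned $I$-vertex is compatible with the chosen color, and (c) count the maximum number of additional short paths realizable using the unused $I$-vertices. I would encode (a)--(c) as a single polynomial-size maximum flow/matching instance: the $I$-vertices sit on one side (each with unit capacity), and on the other side we place one mandatory slot per via-$I$ gap and one optional slot per candidate $s$-$v$-$t$ short path, with compatibility arcs routed through per-path per-color selector nodes that force all via-$I$ gaps belonging to the same long path to be realized under a single common color. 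The optimum value of this flow, taken over all $f(k)$ configurations and incremented by the number of activated long paths, is the \textsc{MaxCDP} optimum.

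The main obstacle is precisely the color-consistency requirement: a naive enumeration of the color of each long path among the $q=|C|$ colors would cost a multiplicative $q^k$ factor, which (since $q$ may be polynomial in $|V|$) is only XP, not FPT. The fix is the per-path per-color selector gadget sketched above, which encodes ``choose at most one color for $P_i$ and, if chosen, fill all of $P_i$'s via-$I$ gaps with $I$-vertices compatible with that color'' locally inside the flow network (of size polynomial in $n$ and $q$); consequently the color choice does not need to be enumerated externally. Combining the $f(k)$-time enumeration of long-path configurations with the polynomial-time flow subroutine yields an algorithm of total running time $f(k)\cdot\text{poly}(n)$, proving that \textsc{MaxCDP} is FPT parameterized by the vertex-cover number.
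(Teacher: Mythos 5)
Your structural analysis is sound up to and including the enumeration of long-path configurations: the independence of $I=V\setminus K$ does force every $st$-path to alternate between $K$-vertices and single bridge vertices of $I$, so paths have length $O(k)$, at most $k$ long paths fit in a solution, and the $f(k)$-time enumeration of skeletons is legitimate. The gap is exactly at the point you yourself flag as ``the main obstacle'': the per-path per-color selector gadget. A single-commodity maximum-flow (or matching) instance cannot express the disjunctive constraint ``all via-$I$ gaps of $P_i$ must be realized under one common color.'' If each gap-slot of $P_i$ routes its unit of flow to some selector node $\sigma_{i,c}$, nothing in flow conservation or capacities prevents an optimal flow from sending gap $1$ through $\sigma_{i,c}$ and gap $2$ through $\sigma_{i,c'}$ with $c\neq c'$; the total flow is the same, but the corresponding path is not uni-color. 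The alternative of modelling each path as one flow unit traversing a per-color ``lane'' fails for the dual reason: the $I$-vertices must be shared capacity-one bottlenecks across all lanes (they are a common resource for all paths and all colors), and a flow unit may enter such a bottleneck from the color-$c$ lane and leave into the color-$c'$ lane, again yielding a non-uni-color path in the flow decomposition. So the reduction is not sound as stated, and since enumerating the $q^{O(k)}$ color choices externally is only XP (as you note), the proof does not go through.

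For comparison, the paper avoids this issue entirely. It first greedily commits to all length-three paths $s,v,t$ (safe for \textsc{MaxCDP}, since at most one solution path can use $v$), after which every remaining solution path must contain a vertex of the vertex cover $V_C$ and has length at most $2k$; hence both the number of paths and their lengths are bounded by $k$, and one invokes the known fixed-parameter algorithm for the combined parameter (number of paths, path length) from~\cite{DBLP:journals/algorithms/BonizzoniDP13}. That algorithm is a color-coding dynamic program in which the color $\lambda\in C$ of the path currently being built is a table index, so the color dimension costs only a polynomial factor $q$ rather than $q^k$ --- precisely the mechanism your flow gadget was meant to supply. Your argument could be repaired by replacing the flow subroutine with such a label-subset dynamic program (or simply by invoking the combined-parameter result after your length and path-count bounds), but not by a max-flow construction.
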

\begin{proof}
We give a constructive proof, by giving a fixed-parameter tractable algorithm for \textsc{MaxCDP}
when parameterized by the size of the vertex cover of the input graph.
The algorithm first considers uni-color paths of length three $s,v,t$, for some $v \in V$. 
Uni-color path of length three
are greedily added to a solution of \textsc{MaxCDP}. Since any solution of \textsc{MaxCDP} 
contains at most one uni-color path that passes through vertex $v$,
it follows that there exists an optimal solution of \textsc{MaxCDP} that contains path $s,v,t$.
Hence, the fixed-parameter tractable algorithm adds 
such path to our solution $\mathcal{P}$, and it removes vertex $v$ from $G$. Let $G'=(V',E',f'_C)$
be the graph after the removal of vertices.

Let $V_C$ be a vertex cover of the resulting graph $G'=(V',E',f_C)$, $|V_C|=k$ (which can be computed in FPT-time).
Since in $G$ there is no uni-color path of length three connecting $s$ and $t$,
the following property holds. Consider a uni-color path $p$ of $G$, 
then $p$ either consists of vertices in $V_C$ or each vertex of $V' \setminus V_C$ that belongs to $p$ 
is adjacent in $p$ to vertices of $V_C \cup \{ s \} \cup \{ t \}$.
This is true since $V_C$ is a vertex cover (and thus $V' \setminus V_C$ is an independent set in $G$).

A consequence of 
this property  is that each uni-color path in $G'=(V',E',f'_C)$ has length at most $2k$.
Moreover, there can be at most $k$ uni-color paths in a solution of \textsc{MaxCDP} on instance
$G'=(V',E',f'_C)$, since each path must contain a vertex of $V_C$ and $|V_C| \leq k$.
Since both the number of paths and the length of paths are bounded by $k$ and \textsc{MaxCDP} is known to be in FPT w.r.t. the combination of these two parameters~\cite{DBLP:journals/algorithms/BonizzoniDP13}, the claimed result follows.
\qed
\end{proof}


\confversion{
This algorithm does not easily extend to \textsc{MaxCDDP}.
}
The main difference between \textsc{MaxCDDP} and \textsc{MaxCDP}, when considering as parameter the vertex cover of the input graph, is that in the latter we can safely add a uni-color path $s,v,t$ 
of length three  to a solution,  while in the former we are not allowed to do it.
Consider for example the uni-color path $s,v,t$ of length three colored by $c$; 
if this path belongs to a solution of \textsc{MaxCDDP}, 
it prevents any other uni-color path $p'$ that passes through $v$ (colored by some color $c'$), but also any path $p''$ colored by $c$ 
(that does not pass through $v$) to be part of the solution. 
So, by adding the path $s,v,t$ to the solution we are computing, we may get a suboptimal solution, since by removing $p$ and by adding $p'$ and $p''$, we possibly compute a larger set of disjoint color uni-color paths (see Figure~\ref{fig:maxcddp} for an illustration).

\begin{figure}
\centering
\begin{tikzpicture}
\tikzstyle{vertex}=[draw,circle,fill=black!0,minimum size=14pt,inner sep=0pt]
\node[vertex] (s) at (1,0) {$s$};
\node[vertex] (v) at (3,1) {$v$};
\node[vertex] (t) at (5,0) {$t$};

\node[vertex] (u) at (3,0) {$u$};

\draw (s) edge[ultra thick, bend left=15, red] (v);
\draw (v) edge[ultra thick, bend left=15,red] (t);

\draw (s) edge[ultra thick, red] (u);
\draw (u) edge[ultra thick, red] (t);

\draw (s) edge[ultra thick, green!60, bend right=15] (v);
\draw (v) edge[bend left,ultra thick, green!60, bend right=15] (t);

\end{tikzpicture}
\caption{An example where adding first the red (dark gray) path $s,v,t$ avoids the better solution with two paths: $s,v,t$ colored green (light gray) and $s,u,t$ colored red (dark gray).}\label{fig:maxcddp}
\end{figure}
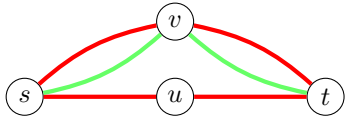

On the other side, we show next that the algorithm described in Theorem~\ref{teo:MaxCDP-VC} returns a solution consisting of at least  half of the paths in an optimal solution of \textsc{MaxCDDP}.
The existence of an exact parameterized algorithm with respect to the size of the vertex cover of the input graph for \textsc{MaxCDDP} remains however open.

\begin{theorem}
\textsc{MaxCDDP} admits an approximation fixed-parameter algorithm parameterized by the size of the vertex cover of the input graph of factor $\frac{1}{2}$.
\end{theorem}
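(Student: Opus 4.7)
The plan is to adapt the algorithm of Theorem~\ref{teo:MaxCDP-VC} by decomposing the computation into a greedy phase followed by an FPT phase, and then bound the loss of the greedy phase via a charging argument. Concretely, I would proceed as follows. Phase~1 (greedy): repeatedly find a uni-color path $s,v,t$ of length three whose color $c$ has not yet been used, add it to a partial solution $\mathcal{P}_1$, and remove $v$ from the graph while marking $c$ as forbidden. Let $G'$ be the resulting graph and $C' \subseteq C$ the set of still-unused colors. Phase~2 (FPT): run an exact algorithm for \textsc{MaxCDDP} on the instance $(G',C')$ and call the obtained set $\mathcal{P}_2$. Return $\mathcal{P}_1 \cup \mathcal{P}_2$.

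The running time bound reuses the observation from the proof of Theorem~\ref{teo:MaxCDP-VC}: once every length-three uni-color path with a free color has been removed, the vertex cover property of $V_C$ in $G'$ forces every remaining uni-color path to alternate internal vertices with vertices of $V_C \cup \{s,t\}$, so each such path has length at most $2k$ and there are at most $k$ of them in any color-disjoint solution. Hence Phase~2 reduces to $\ell$-\textsc{MaxCDDP} with both parameters bounded by a function of $k$, which is FPT by the algorithm discussed in Section~\ref{sec:MaxCDDP-FPT}.

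The heart of the proof is the approximation guarantee, and I expect this to be the main technical point. Fix an optimal solution $OPT$ and let $R \subseteq OPT$ be the paths that either pass through some vertex $v$ deleted during Phase~1 or use some color $c$ marked during Phase~1. Every path of $OPT \setminus R$ lives entirely in $G'$ and is colored by $C'$, and these paths remain vertex- and color-disjoint, so they form a feasible solution of the Phase~2 subinstance; exactness of Phase~2 then yields $|\mathcal{P}_2| \geq |OPT| - |R|$. To bound $|R|$, I charge each $p \in R$ to either the vertex of $V(p)$ that was deleted in Phase~1 or to the color of $p$ that was marked; since $OPT$ is vertex-disjoint each deleted vertex is the witness of at most one such $p$, and since $OPT$ is color-disjoint each marked color is the witness of at most one such $p$, giving $|R| \leq 2|\mathcal{P}_1|$.

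Combining the two inequalities yields $|\mathcal{P}_1| + |\mathcal{P}_2| \geq |OPT| - |\mathcal{P}_1|$, and together with the trivial bound $|\mathcal{P}_1| + |\mathcal{P}_2| \geq |\mathcal{P}_1|$ summing the two gives $2(|\mathcal{P}_1| + |\mathcal{P}_2|) \geq |OPT|$, i.e.\ the $\tfrac{1}{2}$-approximation guarantee. The main obstacle is really only the charging step: one must be careful that when a greedy path $s,v,t$ of color $c$ coincides with a path of $OPT$ through the same $v$ with the same color, that path is charged only once, which is automatic from the set definition of $R$. Everything else follows from the bounded-length/bounded-number FPT result already available for \textsc{MaxCDDP}.
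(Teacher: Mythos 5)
Your proposal is correct and follows essentially the same route as the paper's proof: a greedy phase collecting color-disjoint length-three paths, an exact FPT phase on the residual instance (justified by the same vertex-cover argument bounding path length by $O(k)$ and path number by $k$), and a factor-two charging of the optimal paths lost to the greedy phase, one per deleted vertex and one per marked color. The only cosmetic difference is that you phrase the loss bound as $|R|\leq 2|\mathcal{P}_1|$ and sum two inequalities, whereas the paper writes the equivalent bound $|OPT_A|\leq 2|APPROX_A|$ directly.
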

\begin{proof}
As in Theorem~\ref{teo:MaxCDP-VC} we give a constructive proof.
We will show that there exists a fixed-parameter algorithm, parameterized by the size of the vertex cover of the input graph, that returns a set $\mathcal{P}$ of color disjoint paths of the input graph $G=(V,E,f_C)$ such that, 
given an optimal solution $OPT$ of  \textsc{MaxCDDP} on instance $G=(V,E,f_C)$, it holds that $|\mathcal{P}| \geq \frac{1}{2}|OPT|$.

As for the algorithm of Theorem~\ref{teo:MaxCDP-VC}, 
the fixed-parameter algorithm first considers uni-color paths of length three $s,v,t$, for some $v \in V$. 
First, the fixed-parameter algorithm defines a set $APPROX_A$ of color disjoint uni-color paths of length three,
by greedily adding uni-color paths of length three to $APPROX_A$. 
Define the set of colors 
\[
C'=  \{c: \text{ there exists a path in $APPROX_A$ colored by $c$} \}
\]

Now, the algorithm constructs an instance of \textsc{MaxCDDP} as follows.
First, define the set $C_H$ of remaining colors: $C_H=C \setminus C'$.
Now, let $H=(V_H,E_H,f'_C)$ be the subgraph of $G$ obtained from $G=(V,E,f_C)$ as follows: 
\begin{description}

\item[Step 1] Remove the vertices in $V \setminus \{s,t\}$ that induce paths in $APPROX_A$ 
and all the edges having a color in $C'$.

\item[Step 2] Remove the vertices that, after Step 1, do not belong to a uni-color path from $s$ to $t$ colored 
by some $c \in C_H$.
\end{description}

 
Similarly to Theorem~\ref{teo:MaxCDP-VC}, we compute (in FPT-time) a vertex cover $V_C$ of $H$, with $|V_C|=k$.
Our algorithm computes a maximum cardinality set $APPROX_H$ of color disjoint paths in $H=(V_H,E_H,f'_C)$ having a color in $C_H$. 
We will next show that this is possible in FPT time, where the parameter is $k=|V_C|$, the size of the vertex cover of $H$.
Notice that there is no uni-color path $s,v,t$ of length $3$ in $H$ with color in $C_H$, otherwise it would have been added to $APPROX_A$.
Let $u$ be a vertex of $V_H$.
Since there is no uni-color path of length three that connects $s$ and $t$, either $u \in V_C$ or $u$ is adjacent to at least one vertex of $V_C$.
It follows that each uni-color path in $H$ has length at most $2k$.
Moreover, there can be at most $k$ color disjoint uni-color paths in a solution 
of \textsc{MaxCDDP} on instance $H=(V_H,E_H,f'_C)$, since each path must contain a vertex of $V_C$ and $|V_C| = k$.
Since both the number of paths and the length of paths are bounded by $k$ and \textsc{MaxCDDP} is shown to be in FPT with respect to the combination of these two parameters
(see Section~\ref{sec:MaxCDDP-FPT}), 
we can compute in FPT time, where the parameter is the size of the vertex cover of $H$, a maximum cardinality set $APPROX_H$ of color disjoint paths in $H=(V_H,E_H,f'_C)$.

The algorithm returns the solution $APPROX=APPROX_A \cup APPROX_H$. Notice that by construction
$APPROX$ is a set of color disjoint paths of $G$.
Next, we show that $|APPROX|$ is at least $\frac{1}{2}|OPT|$,  where $OPT$ is an optimal solution of \textsc{MaxCDDP}  on instance $G$.

We partition $OPT$ in two subsets $OPT_A$ and $OPT_H$ as follows:
let $OPT_A$ be the set of uni-color paths that either contain a vertex not in $V_H$
or have a color not in $C_H$; let $OPT_H = OPT \setminus OPT_A$.
Clearly it holds $|APPROX_H| \geq |OPT_H|$, as each uni-color path in $OPT_H$  is
a uni-color path of $H$ and $APPROX_H$ is a maximum cardinality set of color disjoint paths in $H$.

Consider now a uni-color path $p$ of $OPT_A$. 
If $p$ is colored by $c \in C'$, then by construction of $C'$ there exists
a uni-color path, colored by $c$, in $APPROX_A$.
Hence, assume that $p$ is colored by $c \notin C'$.
Next, we show that $p$ must contain a vertex in $u \in V \setminus V_H$, 
such that there exists a uni-color path in 
$APPROX_A$ that contains $u$.
Assume that this is not the case. 
Recall that if $p$ contains a vertex $v$ in $V \setminus V_H$, then $v$ is removed in the construction of graph $H$.
Hence, in $G$, there is no path having color in $C_H$ that passes through $v$, otherwise
it would not have been removed from $G$ in the construction of $H$, and  thus $p$ would have been in $OPT_H$.

It follows that, for each path in $APPROX_A$, $OPT_A$ 
contains at most two color disjoint uni-color paths.
%
Indeed, consider a path $s,v,t$ in $OPT_A$, for some $v \in V \setminus V_H$, colored by $c$, 
for some $c \in C'$. Since $OPT_A$ is a set of color disjoint paths, 
$OPT_A$ contains at most one path colored by $c$ and at most one path 
that passes through $v$ (recall that $APPROX_H$ contains only paths of length $3$). 
Thus $|OPT_A| \leq 2 |APPROX_A|$.
Since $|OPT_H| \leq |APPROX_H|$, it follows that $|OPT| \leq 2 |APPROX|$.
\qed
\end{proof}

\section{A Fixed-Parameter Algorithm for $l$-\textsc{MaxCDDP}}
\label{sec:MaxCDDP-FPT}

In this section, we  give a fixed-parameter algorithm for $l$-\textsc{MaxCDDP}, the length-bounded version of \textsc{MaxCDDP}, parameterized by the number $k$ of uni-color color disjoint $st$-paths of a solution.
Notice that \textsc{MaxCDDP} is W[1]-hard when parameterized by $k$, as the reduction that prove the  W[1]-hardness of \textsc{MaxCDP} parameterized by $k$ consists of paths having distinct colors~\cite{DBLP:journals/algorithms/BonizzoniDP13}.

Next, we present a parameterized algorithm based on the \emph{color coding} technique~\cite{Alon:Yuster:Zwick:1995}.
The algorithm is inspired by the one for  \textsc{MaxCDP}~\cite{DBLP:journals/algorithms/BonizzoniDP13}.
However, in this case we must combine two different labelings,
one to label the vertices that belong to a uni-color path,
one to label the color associated with a uni-color path
of \textsc{MaxCDDP}.

First, we introduce the definition of perfect hash function on which
our algorithm is based.
A family $F$ of hash functions from a set $U$ (the vertex set in the
traditional applications of color coding) to the set $\{l_1, \ldots, l_k\}$
of labels is \emph{$k$-perfect} if, for each subset $U'$ of $U$ with
$|U'| = k$, there exists a hash function $f \in F$ such that
$f$ assigns a distinct label to each element of $U'$. Function $f$ is called
a \emph{labelling function}.

Let $f_v \in F_V$ be a labelling function that assigns to each vertex 
 $v \in V \setminus \{ s,t \}$ a label $f_v(v) \in L_v=\{ 1_v, \dots, h_v \}$, where $h_v=|L_v| \leq l k$.

Consider a second labelling function $f_c \in F_C$ that assigns to each color $c \in C$ a label $f_c(v) \in L_c=\{ 1_c, \dots, h_c \}$, where 
$h_c=|L_c| \leq k$.

By the property of perfect hash functions, we assume that 
each vertex that belongs to a solution of \textsc{MaxCDDP} is assigned 
a distinct label by $f_v$ and that each color $d$, such that there exists 
a uni-color path of \textsc{MaxCDDP} colored by $d$, is associated with distinct label
$f_c$.

A simple path $p$ in $G$ is \emph{perfect} for a set $L_v$ of labels assigned to $V$ if and only if for each vertex $v$ of $p$, with $v \notin \{s,t\}$, $f_v(v) \in L_v$, and for each pair of distinct vertices $u$, $v$ of $p$, $f_v(u) \neq f_v(v)$.
A set $\{ p_1, \ldots , p_k \}$ of uni-color paths is \emph{perfect} for the set $L_v$  and $L_c$ of labels if and only if: 
(1) there exists a partition $\{ L_{v,1}, \ldots, L_{v,k} \}$ of $L_v$ such that each $p_i$ is perfect for $L_i$;
(2) each path $p_i$, with $1 \leq i \leq k$, is colored by $c \in C$ associated with a distinct label in $L_c$. 
We combine two dynamic-programming recurrences to compute, given the labelling functions $f_v$ and $f_c$,
whether there exists a set of perfect uni-color paths in $G$.

First, consider the function $S[L'_v, u, \lambda]$, with $L'_v \subseteq L_v$, $u \in V$ and $\lambda \in C$. 
$S[L'_v, u, \lambda]=1$ if and only if there exists a path from $s$ 
to vertex $u\neq t$, such that the path is perfect for $L'_v$ and $p$ is colored by $\lambda$.

We consider a second function $\Pi[L'_v, M,z] $, with $L'_v \subseteq L_v$ and $M \subseteq L_c$, $0 \leq z \leq k$.
$\Pi[L'_v, M,z] =1$ if and only if there exist a set of labels $L'_v \subseteq L_v$ and a set of labels $M \subseteq L_c$, 
such that there exists a set of $z$ uni-color paths perfect for $L'_v$ and $M$.

$S[L'_v, u, \lambda]$ is defined as follows (we recall that $\uplus$ represents the disjoint union operator).
In the base case, when $u=s$, $S[L'_v, u, \lambda] =  1$ if $L'_v= \emptyset$, otherwise (when $L'_v \neq \emptyset$), 
$S[L'_v, u, \lambda] =  0$.

When $u \neq s$:

\[
S[L'_v, u, \lambda] = \max_{w \in N(u)} \big\{ S[L''_v, w, \lambda] \mid L'_v = L''_v \uplus
  \{f_v(u)\} \wedge \{w,u\}  \text{ is colored by $\lambda$} \} 
\]

Next, we give the recurrence, $\Pi[L'_v,M,z]$. 
In the base case, that is when $z=0$, 
then $\Pi[L'_v,M,0]=1$ if $L'_v = \emptyset$ and $M = \emptyset$,
else $\Pi[L'_v,M,0]=0$.
Recall that $l$ is the bound on the length of each path, $L'_v \subseteq L_v$, $M \subseteq C$ and $0 \leq z \leq k$, 
$\Pi[L'_v,M,z]$ is defined as follows:

\begin{equation}
 \Pi[L'_v,M,z] =
  \begin{cases}
    \begin{aligned}
      \!\max \big\{ &\Pi[L''_v,M \setminus \{ f_c(\lambda) \},z-1] \ \wedge\  S[L^*_v, u, \lambda] \mid \\
      & \
      L'_v = L''_v \uplus L^*_v  \wedge\ |L^*| \leq l-1 \ \wedge\ \lambda \in C \wedge f_c(\lambda) \in M   \ \wedge \ \\
      & \
      \text{$\{u, t\}\in E$ is colored by $\lambda$} 
      \big\}
    \end{aligned} & 
  \end{cases}
  \label{eq:fpt:P}
\end{equation}

Next, we prove the correctness of the two recurrences.
\fullversion{We start by proving the correctness of the recurrence for $S[L'_v, u, \lambda]$.}

\begin{lemma} 
\label{lem:FTPlMAxCCDPrecS}
Given a labelling $f_v$ of the vertices of $G$, a color $\lambda \in C$, a vertex $u$ and a set
$L'_v \subseteq L_v$, there exists a simple path $p$ in $G$ from $s$ to $u$ \emph{perfect} for 
$L'_v$ if and only if $S[L'_v, u, \lambda]=1$.
\end{lemma}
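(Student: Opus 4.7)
The plan is to prove both directions by induction, exploiting the fact that the recurrence effectively builds paths edge-by-edge from $s$, with the label set $L'_v$ tracking exactly which vertex-labels have been consumed so far. Throughout, I will carry the color $\lambda$ as a fixed parameter: the recurrence only allows transitions along edges colored $\lambda$, so the ``uni-color'' aspect is automatic and the induction really reduces to the standard color-coding correctness argument on the vertex labelling $f_v$.

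For the \emph{if} direction I would induct on $|L'_v|$ (equivalently, on the depth of the recursive call used to establish $S[L'_v,u,\lambda]=1$). In the base case $u=s$, the recurrence forces $L'_v=\emptyset$ and the empty path from $s$ to $s$ is trivially perfect and trivially colored by $\lambda$. In the inductive step, if $S[L'_v,u,\lambda]=1$ with $u\neq s$, then some $w\in N(u)$ witnesses the maximum, giving $S[L''_v,w,\lambda]=1$ with $L'_v = L''_v\uplus\{f_v(u)\}$ and $\{w,u\}$ colored $\lambda$. By the induction hypothesis there is a simple path $p'$ from $s$ to $w$ perfect for $L''_v$ and colored $\lambda$. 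Extending $p'$ by the edge $\{w,u\}$ produces a walk $p$ from $s$ to $u$ colored $\lambda$; the key point is that $p$ is actually a simple path, because every internal vertex of $p'$ has its label in $L''_v$, while $f_v(u)\notin L''_v$ by disjointness of the union, so $u$ cannot already appear on $p'$. The label set used by $p$ is exactly $L''_v\cup\{f_v(u)\}=L'_v$, so $p$ is perfect for $L'_v$.

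For the \emph{only if} direction I would induct on the number of edges of the path $p$. In the base case $p$ is the trivial path at $s$, so $u=s$ and $L'_v=\emptyset$, and the base clause of the recurrence gives $S[\emptyset,s,\lambda]=1$. Otherwise, write $p$ as $p'$ followed by the edge $\{w,u\}$, where $p'$ is a simple path from $s$ to $w$. Since $p$ is simple and perfect for $L'_v$ and colored by $\lambda$, the subpath $p'$ is simple and colored by $\lambda$, and the labels it uses form exactly $L''_v := L'_v\setminus\{f_v(u)\}$ (with $L'_v=L''_v\uplus\{f_v(u)\}$, because $u\neq s,t$ does not occur on $p'$). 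By the induction hypothesis $S[L''_v,w,\lambda]=1$, and since $\{w,u\}$ is colored $\lambda$, the pair $(w,L''_v)$ is a legal choice in the max defining $S[L'_v,u,\lambda]$, forcing that value to be $1$.

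The only subtle point is bookkeeping for the disjoint union: I need the invariant that the labels appearing on a path perfect for some set $L'_v$ are exactly the elements of $L'_v$ (not merely a subset), which is what makes the equality $L'_v = L''_v\uplus\{f_v(u)\}$ match the recurrence. This follows directly from the definition of ``perfect'' once one notes that distinct vertices receive distinct labels on the path, so $|L'_v|$ equals the number of internal vertices of $p$. No other genuine obstacle arises; the colour label $\lambda$ plays a passive role since the recurrence restricts $N(u)$ to neighbours joined by a $\lambda$-edge.
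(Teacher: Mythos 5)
Your proposal is correct and follows essentially the same route as the paper's proof: induction on the length of the path (equivalently, the depth of the recursion), splitting off the last edge $\{w,u\}$ and matching the disjoint union $L'_v = L''_v \uplus \{f_v(u)\}$ against the recurrence. Your explicit justification that the extended walk remains simple (since $f_v(u)\notin L''_v$ while every internal vertex of $p'$ carries a label in $L''_v$) is a detail the paper leaves implicit, but it does not change the argument.
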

{
\begin{proof}
We prove the lemma by induction on the length of the path $p$.
First, we consider the base case, that is $u=s$. Since $s$ is not associated with a label
in $L_v$, it holds $S[L'_v, u, \lambda]=1$ if and only if $L'_v= \emptyset$.

Consider now the general case and assume that there exists a path $p$ in $G$ perfect for $L'_v$ such that $p$ is colored by $\lambda$.
Consider the last vertex $u$ of $p$, and let $w$ be the vertex adjacent to $u$ in $p$. 
By induction hypothesis, it follows that 
$S[L''_v, w, \lambda]=1$, where $L'_v=L''_v \uplus \{f_v(u) \}$.
By the definition of the recurrence, then $S[L'_v, u, \lambda]=1$.

Assume that $S[L'_v, u, \lambda]=1$. 
By the definition of the recurrence it holds that $S[L''_v, w, \lambda]=1$, where $L'_v=L''_v \uplus \{f_v(u) \}$ 
and there is an edge  $\{ u,w \} \in E$ colored by $\lambda$. 
By induction hypothesis, since $S[L''_v, w, \lambda]=1$, there exists a path $p'$ from $s$ to $w$ perfect for $L''_v$
such that $p'$ is colored by $\lambda$. 
Since  $\{ u,w \} \in E$ is colored by $\lambda$, 
it follows that there exists a simple path $p$ in $G$ from $s$ to $u$ perfect for $L'_v$.
\qed
\end{proof}
}

\fullversion{Now, we prove the correctness of the recurrence for $\Pi[L'_v,M,z]$.}

\begin{lemma} 
\label{lem:FTPlMAxCCDPrecPi}
Given a labelling $f_v$ of the vertices of $G$ and a labelling $f_c$ of the set $C$ of colors, 
a set $L'_v \subseteq L_v$, a set $M \subseteq L_c$, and integer $z$ with $0 \leq z \leq k$, there exists
a set $\{ p_1, \ldots , p_z \}$ of uni-color paths which
is perfect for $L'_v$ and $M$ if and only if  $\Pi[L'_v,M,z]=1$.
\end{lemma}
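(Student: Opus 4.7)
The plan is to prove this by induction on $z$, the number of paths in the set. The base case $z=0$ is immediate from the stipulation that $\Pi[L'_v, M, 0]=1$ iff $L'_v = \emptyset$ and $M = \emptyset$: the only ``perfect'' set of $0$ paths is the empty set, which uses no vertex-labels and no color-labels. For the inductive step, I will assume the equivalence holds for $z-1$ and combine it with Lemma~\ref{lem:FTPlMAxCCDPrecS} (which already relates the auxiliary table $S$ to existence of a single perfect uni-color $s$-to-$u$ path).

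For the forward direction, suppose there is a perfect set $\{p_1,\dots,p_z\}$ of uni-color paths for $(L'_v, M)$. Pick any path, say $p_z$, with color $\lambda$; by the definition of perfect, $f_c(\lambda) \in M$, and $p_z = s, v_1, \dots, v_r, t$ with edge $\{v_r,t\}$ colored $\lambda$ and the internal labels $L^*_v := \{f_v(v_1),\dots,f_v(v_r)\}$ pairwise distinct and contained in $L'_v$. Since $p_z$ has length at most $l$, we have $|L^*_v| \leq l-1$. The remaining set $\{p_1,\dots,p_{z-1}\}$ is still perfect, using exactly the label sets $L'_v \setminus L^*_v$ and $M \setminus \{f_c(\lambda)\}$ (internal disjointness of the $p_i$ forces the partition $L'_v = (L'_v \setminus L^*_v) \uplus L^*_v$, and color-disjointness forces the analogous partition of $M$). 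Applying the inductive hypothesis yields $\Pi[L'_v \setminus L^*_v, M \setminus \{f_c(\lambda)\}, z-1] = 1$, and applying Lemma~\ref{lem:FTPlMAxCCDPrecS} to the $s$-to-$v_r$ portion of $p_z$ yields $S[L^*_v, v_r, \lambda] = 1$. Plugging $w := v_r$ into the recurrence~\eqref{eq:fpt:P} gives $\Pi[L'_v, M, z] = 1$.

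For the reverse direction, suppose $\Pi[L'_v, M, z] = 1$. Then the recurrence provides a color $\lambda$ with $f_c(\lambda) \in M$, a vertex $u$ with $\{u,t\} \in E$ colored $\lambda$, and a partition $L'_v = L''_v \uplus L^*_v$ with $|L^*_v| \leq l-1$, such that $\Pi[L''_v, M \setminus \{f_c(\lambda)\}, z-1] = 1$ and $S[L^*_v, u, \lambda] = 1$. By Lemma~\ref{lem:FTPlMAxCCDPrecS}, there is a simple $s$-to-$u$ path $q$ perfect for $L^*_v$, colored $\lambda$; appending the edge $\{u,t\}$ yields a uni-color $st$-path $p_z$ of length at most $l$. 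By the inductive hypothesis, there is a set $\{p_1,\dots,p_{z-1}\}$ of uni-color paths perfect for $(L''_v, M \setminus \{f_c(\lambda)\})$. I claim $\{p_1,\dots,p_{z-1},p_z\}$ is perfect for $(L'_v, M)$: internal vertex-disjointness from the $p_i$ follows because the label sets $L''_v$ and $L^*_v$ are disjoint (and each $p_i$ uses only labels from its assigned block), while color-disjointness follows because $f_c(\lambda) \notin M \setminus \{f_c(\lambda)\}$, so no $p_i$ has color $\lambda$.

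The only subtle point, and the one most likely to need explicit care, is justifying that the label-disjointness genuinely forces vertex-disjointness and color-disjointness; this relies on the assumption that the labelling functions $f_v$ and $f_c$ are injective on the set of vertices and colors appearing in the solution under consideration, which is exactly the perfect-hashing guarantee exploited by the outer color-coding algorithm. Once this is recorded, both directions are mechanical case analyses on the recurrence.
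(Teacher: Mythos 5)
Your proof is correct and follows essentially the same route as the paper's: induction on $z$, peeling off one uni-color path, invoking Lemma~\ref{lem:FTPlMAxCCDPrecS} for the $s$-to-$u$ segment of that path, and applying the inductive hypothesis to the remaining $z-1$ paths via the partition of $L'_v$ and the removal of $f_c(\lambda)$ from $M$. Your closing remark about the perfect-hashing guarantee being what converts label-disjointness into vertex- and color-disjointness is a point the paper's own proof states only briefly (``by the property of labelling $f_c$''), so making it explicit is a welcome addition rather than a deviation.
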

{
\begin{proof}
We prove the lemma by induction on the number of uni-color paths.
First, we consider the base case, that is $z=0$. 
Then there is no uni-color path perfect for $L'_v = \emptyset$ and $M= \emptyset$ if and only if $\Pi[\emptyset,\emptyset,0]=1$.

Consider now that there exist $z$ disjoint color uni-color paths. 
Consider one of such paths, denoted by $p$, which is colored by $\lambda$ and whose vertices
are associated with set of labels $L^*_v$ and such that 
the vertex of $p$ adjacent to $t$ is $u$, hence $\{ u,t \} \in E$ is colored by $\lambda$. Then, 
by Lemma~\ref{lem:FTPlMAxCCDPrecS} $S[L^*_v, u, \lambda]=1$.
Moreover, by induction hypothesis it holds $\Pi[L''_v,M \setminus \{ f_c(\lambda) \},z-1]=1$, 
where $L'_v = L^*_v \uplus L''_v$ and $f_c(\lambda) \in M$.
Hence, by the definition of the recurrence for $\Pi$, it holds $\Pi[L'_v,M,z]=1$.

Consider the case that $\Pi[L'_v,M,z]=1$. By the definition of function $\Pi$,
it follows that there exists a color $\lambda \in C$, with $f_c(\lambda) \in M$, 
and a set of labels $L^*_v \subseteq L_v$, such that
$\Pi[L''_v,M \setminus \{ f_c(\lambda) \},z-1]=1 $, where 
$L'_v = L''_v \uplus L^*_v $, and $S[L^*_v, u, \lambda]=1$.
By induction hypothesis, since $\Pi[L''_v,M \setminus \{ f_c(\lambda) \},z-1]=1 $,
there exists a set $P'$ of $z-1$ paths perfect for the sets $L''_V$ and $M \setminus \{ f_c(\lambda) \}$.
By Lemma~\ref{lem:FTPlMAxCCDPrecS} there exists a path $p'$ from $s$ to $u$  
colorful for $L''_v$ and that has color $\lambda$. Moreover, since $\Pi[L'_v,M,z]=1$, $\{ u,t \} \in E$ 
is colored by $\lambda$.
By the property of labelling $f_c$, no path of $P'$ has label $f_c(\lambda)$, hence
$P' \cup p$ is perfect for $L'_v$ and $M$.
\qed
\end{proof}
}

We can now state the main result.

\begin{theorem} 
$l$-\textsc{MaxCDDP} can be solved in time $2^{O(l k)}poly(n)$. 
\end{theorem}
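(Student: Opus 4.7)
The plan is to derandomize a color-coding scheme built on top of the two recurrences already shown correct by Lemmas~\ref{lem:FTPlMAxCCDPrecS} and~\ref{lem:FTPlMAxCCDPrecPi}. I would first observe that any feasible solution of $l$-\textsc{MaxCDDP} with $k$ uni-color color-disjoint paths uses at most $k(l-1) \leq lk$ internal vertices (those different from $s$ and $t$) and exactly $k$ distinct colors. Hence it suffices to construct an $(lk)$-perfect family $F_V$ of hash functions from $V \setminus \{s,t\}$ to $L_v = \{1_v, \dots, h_v\}$ with $h_v \leq lk$, and a $k$-perfect family $F_C$ of hash functions from $C$ to $L_c = \{1_c, \dots, h_c\}$ with $h_c \leq k$. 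Using the standard constructions of Alon--Yuster--Zwick, we may take $|F_V| = 2^{O(lk)} \log n$ and $|F_C| = 2^{O(k)} \log |C|$, each hash function being evaluable in polynomial time.

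Next, for each pair $(f_v, f_c) \in F_V \times F_C$ I would fill in the tables $S[L'_v, u, \lambda]$ and $\Pi[L'_v, M, z]$ via the recurrences given above, and then check whether there exist $L'_v \subseteq L_v$ and $M \subseteq L_c$ with $\Pi[L'_v, M, k] = 1$. The algorithm accepts if and only if at least one labelling pair yields such a positive entry. The defining property of the perfect hash families guarantees completeness: if a solution exists, some pair $(f_v, f_c)$ will assign distinct labels to all its internal vertices and to all its $k$ colors, producing a configuration that is \emph{perfect} in the sense of the definition, which by Lemmas~\ref{lem:FTPlMAxCCDPrecS} and~\ref{lem:FTPlMAxCCDPrecPi} is certified by $\Pi[L'_v, M, k] = 1$. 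Soundness follows immediately from the same lemmas, since any perfect collection of paths is in particular a valid set of $k$ color-disjoint uni-color paths of length at most $l$.

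For the running time, the table $S$ has $2^{h_v} \cdot n \cdot |C| = 2^{O(lk)} \cdot \mathrm{poly}(n)$ entries, and each entry is computed in polynomial time by scanning $N(u)$. The table $\Pi$ has $2^{h_v} \cdot 2^{h_c} \cdot (k+1) = 2^{O(lk)}$ entries; the apparent difficulty in each entry is the enumeration of the disjoint split $L'_v = L''_v \uplus L^*_v$, but the number of such splits is bounded by $2^{|L'_v|} \leq 2^{lk}$, and for each split we only iterate over $\lambda \in M$ and over neighbors $u$ of $t$, so the work per entry is again $2^{O(lk)} \cdot \mathrm{poly}(n)$. Multiplying by $|F_V| \cdot |F_C| = 2^{O(lk)} \cdot \mathrm{poly}(n, |C|)$ yields the overall bound $2^{O(lk)} \cdot \mathrm{poly}(n)$ (note $|C| \leq \binom{n}{2}$).

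The main obstacle I expect is bookkeeping rather than a conceptual hurdle: one must be careful that the partition $L'_v = L''_v \uplus L^*_v$ in the $\Pi$ recurrence is treated as a \emph{disjoint} union so that vertex-disjointness of the $k$ paths is preserved, and similarly that the label $f_c(\lambda)$ is removed from $M$ at each step so that color-disjointness is enforced; both of these are visible in the recurrence but must be argued carefully when invoking the two correctness lemmas iteratively. Once this is in place, the complexity bookkeeping reduces to the standard color-coding analysis and the claimed $2^{O(lk)} \cdot \mathrm{poly}(n)$ time follows.
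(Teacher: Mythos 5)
Your proposal is correct and follows essentially the same route as the paper's own proof: iterate over the two perfect hash families, fill the tables $S$ and $\Pi$ via the given recurrences, invoke Lemmas~\ref{lem:FTPlMAxCCDPrecS} and~\ref{lem:FTPlMAxCCDPrecPi} for correctness, and bound the table sizes and per-entry work to obtain $2^{O(lk)}\cdot poly(n)$. Your write-up is in fact slightly more explicit than the paper's about the derandomization and the enumeration of the splits $L'_v = L''_v \uplus L^*_v$, but there is no substantive difference in approach.
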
\label{thm:fptcddp}
{
\begin{proof}
An optimal solution of $l$-\textsc{MaxCDDP} consisting of $k$ color disjoint paths exists if and only if $\Pi[L_v,M,k]=1$.
The correctness of the recurrence to compute $\Pi$ follows from Lemma~\ref{lem:FTPlMAxCCDPrecPi}. 
Now, we discuss the time complexity to compute $\Pi[L'_v,M,z]$ and $S[L'_v, u, \lambda]$.
First, consider $S[L'_v, u, \lambda]$.  It consists of $2^{l k} n q$ entries and each entry can be computed in time 
$O(n)$, as we consider each vertex $w \in N(u)$.

Now, consider $\Pi[L'_v,M,z]$. It consists of $2^{k(l+1)} k$ entries. In order to compute $\Pi[L'_v,M,z]$,
at most $2^{kl} k$ entries must be considered, since  $\Pi[L^*_v,M\setminus \{f_c(v)\},z-1]$ is considered,
where we have $2^{kl}$ subsets $L^*_v \subseteq L'_v$ and $k$ labels $f_c(v)$.
Given two labelling functions $f_v$ and $f_c$, the time complexity to compute the entries  $\Pi[L'_v,M,z]$
is $O(2^{k(2l+1)} k n)$.
By the property of color-coding~\cite{Alon:Yuster:Zwick:1995}, a function $f_v \in F_v$ and a function $f_v \in F_v $ 
can be computed in time $2^{O(l k)}poly(n)$ and $2^{O(k)}poly(n)$, respectively, hence
in time $2^{O(l k)}poly(n)$ . 
\qed
\end{proof}
}


\section{FPT Inapproximation}
\label{sec:FPTInapprox}

Since \textsc{MaxCDP} and \textsc{MaxCDDP} are hard to approximate in poly-time and do not admit fixed-parameterized algorithm for parameter number of paths, it is  worth to investigate approximation in FPT time, i.e. find approximate solution with additional time.
Unfortunately, in this section, we show that both \textsc{MaxCDP} and \textsc{MaxCDDP} do not admit an FPT cost $\rho$-approximation, for any function $\rho$ of the optimum, unless FPT=W[1]. 
We will show the result by giving a reduction from the \textsc{Threshold Set} problem.
Marx showed that the {\textsc{Threshold Set}} problem does not admit a fpt cost $\rho$-approximation, for any function $\rho$ of the optimum, unless FPT=W[1]~\cite{DBLP:journals/jcss/Marx13}.

First, we introduce the definition of the \textsc{Threshold Set} problem. 

\problemopt{\textsc{Threshold Set}}{
a set $U$ of elements, a collection $\S= \{S_1, \dots, S_q \}$ of subsets of $U$ and a positive integer weight $w(S_i)$ for each $S_i \in \S$,
with $1 \leq i \leq q$.}{a set $T \subseteq U$ of maximum cardinality such that $|T \cap S_i| \leq w(S_i)$ for every $S_i \in \S$.}

The cost of a solution of \textsc{Threshold Set} is denoted by $|T|$. 
Notice that this problem can be seen as a generalization of the \textsc{Independent Set} problem; 
indeed, for a graph $G=(V,E)$, we can define $U = V$, $\S = E$ and $w(S) = 1$ for every set $S \in \S$. 

We will reduce \textsc{Threshold Set} to {\textsc{MaxCDP}} in polynomial time such that there is a ``one-to-one'' 
correspondence between the solutions of the two problems, 
therefore the inapproximability result transfers to {\textsc{MaxCDP}}, and then to {\textsc{MaxCDDP}}. 
The reduction is inspired by the one in~\cite{DBLP:journals/algorithms/BonizzoniDP13}, that shows inapproximability in polynomial time and W[1]-hardness of \textsc{MaxCDP}.

First, we design the reduction for the {\textsc{MaxCDP}} problem. 
Notice that we assume that we are given an ordering over the sets in $\S$ (i.e. $S_i < S_j, i<j$).
Consider an instance $(U,\S,w)$ of \textsc{Threshold Set}, we define a corresponding instance $(G=(V,E,f_C), s, t)$ of {\textsc{MaxCDP}}. 
The set $V$ of vertices is defined as follows:
\[
V = \{s, t\} \cup \{ s_i | i \in [|U|]\} \cup \{S_i^j | i \in [|\S|], 1 \leq j \leq w(S_i) \}
\]

The set of colors $C$ is defined as follows: $C=\{ c_i: i \in U \}$.

Now, we define the set $E$ of edges.
\begin{itemize}

\item for all $i \in [|U|]$, define an edge $\{s,s_i  \}$ colored by $c_i$ and an edge $\{s_i, S_q^j\}$ colored by $c_i$, for all $1 \leq j \leq w(S_q)$, where $q$ is the smallest index of a set $S_q \in \S$ such that $i \in S_q$, 

\item from each $S_q^j$, define an edge $\{ S_q^j, S_{q'}^{j'}\}$ colored by $c_i$, for all $1 \leq j' \leq w(S_{q'})$, such that $i \in S_q$, $i \in S_{q'}$, $q' > q$ and, for each $q <l<q'$, it holds $i \notin S_l$,

\item from each $S_q^j$, define an edge $\{ S_q^j,t \}$ colored by $c_i$, where $i \in S_q$ and for each $q' > q$ with $S_{q'} \in \S$, $i \notin S_{q'}$.

\end{itemize}


See~\autoref{fig:innaprox} for an example.
Now, we prove the main properties of the reduction.

\begin{figure}[ht]
	\begin{center}
		\begin{tikzpicture}[scale=1,transform shape]

\node[vertex] (s) at (-0.5,0)	{$s$};

\node[vertex] (s1) at (1,-1.5) {$s_{1}$};
\node[vertex] (s2) at (1,1.5) {$s_{2}$};
\node[vertex] (s3) at (1,0.5) {$s_{3}$};
\node[vertex] (s4) at (1,-0.5) {$s_{4}$};

\node[vertex] (S11) at (3.5,-2) {$S_1^1$};
\node[vertex] (S12) at (3.5,2) {$S_1^2$};

\node[vertex] (S21) at (6,0) {$S_2^1$};

\node[vertex] (S31) at (8.5,-2) {$S_3^1$};
\node[vertex] (S32) at (8.5,2) {$S_3^2$};

\node[vertex] (t) at (11,0) {$t$};

\draw (s) edge[ultra thick] node[midway,fill=white] {1} (s1);
\draw (s) edge[ultra thick] node[midway,fill=white] {2} (s2);
\draw (s) edge node[midway,fill=white] {3} (s3);
\draw (s) edge node[midway,fill=white] {4} (s4);

\draw (s1) edge[ultra thick] node[midway,fill=white] {1} (S11) (S11) edge[ultra thick] node[midway,fill=white] {1} (S21);
\draw (s1) edge node[midway,fill=white] {1} (S12) (S12) edge node[midway,fill=white] {1} (S21);
\draw (S21) edge[ultra thick] node[midway,fill=white] {1} (t);

\draw (s2) edge node[midway,fill=white] {2} (S11) (S11) edge[] node[midway,fill=white] {2} (S31);
\draw (s2) edge [ultra thick] node[midway,fill=white] {2} (S12)  (S12) edge[ultra thick,bend left=30] node[midway,fill=white] {2} (S32);
\draw (S12) edge[bend left] node[midway,fill=white] {2} (S31);
\draw (S11) edge[bend left] node[midway,fill=white] {2} (S32);
\draw (S31) edge[bend right=55] node[midway,fill=white] {2} (t);
\draw (S32) edge[ultra thick,bend left=45] node[midway,fill=white] {2} (t);

\draw (s3) edge[bend right=15] node[midway,fill=white] {3} (S11) (S11) edge[bend right=15] node[midway,fill=white] {3} (S31);
\draw (s3) edge node[midway,fill=white] {3} (S12)  (S12) edge[bend left=15] node[midway,fill=white] {3} (S32);
\draw (S12) edge[bend left=50] node[midway,fill=white] {3} (S31);
\draw (S11) edge[bend right] node[midway,fill=white] {3} (S32);
\draw (S31) edge[bend right] node[midway,fill=white] {3} (t);
\draw (S32) edge[bend right] node[midway,fill=white] {3} (t);

\draw (s4) edge node[midway,fill=white] {4} (S21);
\draw (S21) edge node[midway,fill=white] {4} (S31);
\draw (S21) edge node[midway,fill=white] {4} (S32);
\draw (S31) edge node[midway,fill=white,bend right=55] {4} (t);
\draw (S32) edge node[midway,fill=white,bend left=35] {4} (t);
%

		\end{tikzpicture}
	\end{center}
\caption{Sample construction of an instance of \textsc{MaxCDP} from an instance of \textsc{Threshold Set} with $\S = \{\{1,2,3\}, \{1,4\}, \{2,3,4\} \}, w(S_1) = 2, w(S_2)=1, w(S_3)=2$. A solution for this instance of \textsc{Threshold Set} could be $T' = \{1,2\}$, and we drawn with dark edges the corresponding disjoint paths for \textsc{MaxCDP}.}\label{fig:innaprox}
\end{figure}
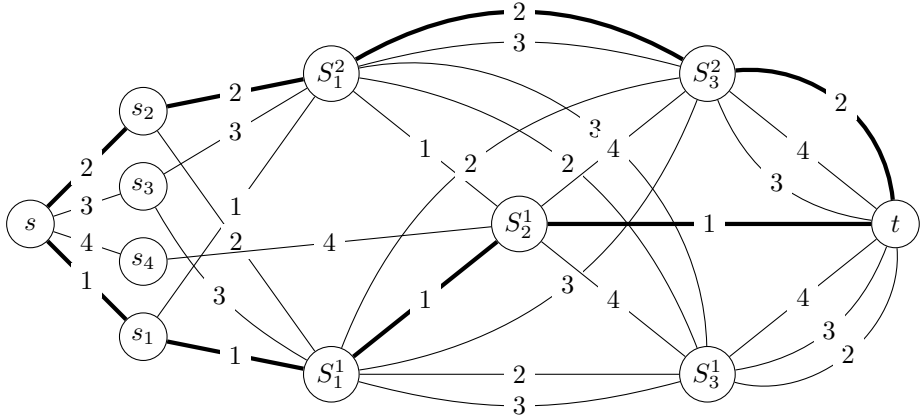

\begin{lemma}
\label{lemInapprox1}
Given an instance $(U,\S,w)$ of {\textsc{Threshold Set}}, let $(G=(V,E,f_C), s, t)$ be the corresponding instance of {\textsc{MaxCDP}}.
Then, given a solution $T'$ of {\textsc{Threshold Set}} on instance $(U,\S,w)$, we can compute in polynomial time 
a set of $|T'|$ disjoint uni-color paths in $(G=(V,E,f_C), s, t)$.
\end{lemma}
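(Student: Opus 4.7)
The plan is to build one uni-color $st$-path per element selected in $T'$, using color $c_i$ for element $i \in T'$, and to show that the threshold constraints $|T' \cap S_q| \leq w(S_q)$ are exactly what we need to route these paths through disjoint copies of the set-gadgets. Concretely, for each $i \in T'$, let $S_{q_1} < S_{q_2} < \cdots < S_{q_{r_i}}$ be the sets of $\S$ that contain $i$, enumerated according to the fixed ordering on $\S$. The candidate path $\pi_i$ will be
\[
s, \; s_i, \; S_{q_1}^{j_1(i)}, \; S_{q_2}^{j_2(i)}, \; \dots, \; S_{q_{r_i}}^{j_{r_i}(i)}, \; t,
\]
where the copies $j_p(i) \in \{1,\dots,w(S_{q_p})\}$ remain to be chosen. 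By the three bullets of the construction, every consecutive edge along $\pi_i$ exists and carries color $c_i$ regardless of the choice of copies, so $\pi_i$ is indeed a uni-color $st$-path of color $c_i$.

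The key step is choosing the copies so that the paths are pairwise internally vertex-disjoint. The intermediate vertices on $\pi_i$ are $s_i$ together with one vertex $S_q^{j_p(i)}$ for each $S_q \in \S$ with $i \in S_q$. The vertices $s_i$ cause no conflict because they are distinct for distinct elements $i$. For the set-gadget vertices, I would process each $S_q \in \S$ separately: since $|T' \cap S_q| \leq w(S_q)$, I can pick an injection $\phi_q\colon T' \cap S_q \to \{1,\dots,w(S_q)\}$ and set $j_p(i) := \phi_{q_p}(i)$. This guarantees that whenever two selected elements $i, i' \in T'$ both traverse the gadget for $S_q$, they use different copies $S_q^{\phi_q(i)}$ and $S_q^{\phi_q(i')}$.

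With this routing, the set $\{\pi_i : i \in T'\}$ consists of $|T'|$ uni-color $st$-paths that share no internal vertex, giving exactly the required solution of \textsc{MaxCDP}. Computing the enumerations of sets containing each $i$ and the injections $\phi_q$ is obviously polynomial-time. The main (and essentially only) obstacle is the disjointness argument, which reduces cleanly to the defining inequality of \textsc{Threshold Set}; everything else is a direct verification against the three edge-generating rules of the reduction.
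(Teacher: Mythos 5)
Your proposal is correct and follows essentially the same approach as the paper: one uni-color path of color $c_i$ per element $i \in T'$, routed through $s_i$ and one copy vertex per set containing $i$, with disjointness secured by an injection from $T' \cap S_q$ into the $w(S_q)$ copies. The paper simply instantiates your injection $\phi_q$ concretely as ``$i$ is the $j$-th element of $T'$ in $S_q$,'' which is the same argument.
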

\begin{proof}
Consider a solution $T'$ of {\textsc{Threshold Set}} on instance $(U,\S,w)$, and define a set $P$ of $|T'|$ disjoint uni-color paths in $(G=(V,E,f_C), s, t)$ as follows. 
For each $i \in T'$, define a uni-color path $p$ colored by $c_i$ that starts in $s$, passes through $s_i$,  and for each $S_q \in \S$, if $i$ is the $j$-th element of $T'$ in $S_q$, $1 \leq j \leq w(S_q)$, passes through vertex $S_q^j$.
It follows that the path defined are disjoints, as at most one element can be the $j$-th element
of $T'$ in $S_q$ and $|T' \cap S_q| \leq w(S_q)$.
\qed
\end{proof}

\begin{lemma} 
\label{lemInapprox2}
\sloppy 
Given an instance $(U,\S,w)$ of {\textsc{Threshold Set}}, let $(G=(V,E,f_C), s, t)$ be the corresponding instance of {\textsc{MaxCDP}}.
Then, given a set of $q$ disjoint uni-color paths in $(G=(V,E,f_C), s, t)$, we can compute in polynomial time
a solution of size $q$ of {\textsc{Threshold Set}} on instance $(U,\S,w)$.
\end{lemma}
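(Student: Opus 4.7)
The plan is to give an explicit construction of a solution $T'$ to the \textsc{Threshold Set} instance from the given set of $q$ disjoint uni-color paths, essentially inverting the construction of Lemma~\ref{lemInapprox1}. First I would observe a key structural property of $G$: for each $i \in U$, the vertex $s_i$ is the unique neighbor of $s$ via an edge colored by $c_i$, by construction of the edge set. Consequently, any uni-color $st$-path colored by $c_i$ must begin with the edge $\{s, s_i\}$, so among any collection of internally disjoint $st$-paths, no two can share a color. This shows that $q$ disjoint uni-color paths use $q$ distinct colors, which I will index as $\{c_i : i \in T'\}$ with $|T'| = q$.

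Next I would characterize the structure of a $c_i$-colored $st$-path. By construction, from $s_i$ the only $c_i$-colored outgoing edges (other than back to $s$) lead to vertices $S_{q_1}^j$, where $S_{q_1}$ is the smallest-indexed set of $\S$ containing $i$. Similarly, from any vertex $S_q^j$ with $i \in S_q$, the only $c_i$-colored edges lead either to vertices $S_{q'}^{j'}$ where $q'$ is the next index after $q$ such that $i \in S_{q'}$, or to $t$ when $q$ is already the largest such index. Therefore any uni-color $c_i$-path visits exactly one vertex of the form $S_q^{j}$ for every $S_q \in \S$ containing $i$, and no other $S_q^{j}$-vertex.

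Now I would define $T' = \{i : \text{the solution contains a path colored by } c_i\}$, so $|T'| = q$. To verify feasibility, consider any $S_q \in \S$. For each $i \in T' \cap S_q$, the $c_i$-colored path in the solution passes through some internal vertex $S_q^{j(i)}$ with $1 \leq j(i) \leq w(S_q)$ by the characterization above. Since the paths are pairwise internally disjoint, the indices $j(i)$, as $i$ ranges over $T' \cap S_q$, are pairwise distinct, hence $|T' \cap S_q| \leq w(S_q)$. Thus $T'$ is a feasible solution of cardinality $q$, and the construction is clearly polynomial-time.

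The main (and only) obstacle is the careful verification of the second step: precisely arguing that every $c_i$-colored $st$-path must traverse one vertex for every set of $\S$ containing $i$. This follows directly from the ``layered'' structure of the reduction (edges only connect consecutive sets containing $i$), so there is no genuine difficulty. Combining this lemma with Lemma~\ref{lemInapprox1} will yield the desired parameter-preserving correspondence and transfer the inapproximability of \textsc{Threshold Set} to \textsc{MaxCDP}, and then to \textsc{MaxCDDP} (since all colors appearing in the constructed paths are already distinct, a solution for \textsc{MaxCDP} is automatically color-disjoint).
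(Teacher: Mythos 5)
Your proposal is correct and follows essentially the same route as the paper's proof: identify the color of a path by the forced first edge $\{s,s_i\}$ (so disjoint paths have distinct colors), set $T'$ to the colors used, and bound $|T'\cap S_q|$ by $w(S_q)$ via the at most $w(S_q)$ vertices $S_q^j$ that the corresponding paths must traverse. Your write-up merely makes explicit the layered-structure step (that a $c_i$-colored path visits exactly one $S_q^j$ for every $S_q$ containing $i$) that the paper leaves implicit.
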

{
\begin{proof}
Consider a set $P$ of disjoint uni-color paths in $(G=(V,E,f_C), s, t)$. 
First, we claim that each path in $P$ has a distinct color.
Indeed, the paths in $P$ must be disjoint and, by construction, for each color $c_i$ each path must pass trough vertex $s_i$.

Now, starting from $P$, we define a solution $T'$ of of \textsc{Threshold Set} on instance $(U,\S,w)$.
For each path $p \in P$ colored by $c_i$, elements $u_i$ belongs to $T'$.
We show that $T'$ is a a solution of \textsc{Threshold Set} on instance $(U,\S,w)$.

Consider a set $S_i \in \S$, then there exists a most $w(S_i)$ elements in $T'$. 
Indeed, notice that, by construction, there exists at most $w(S_i)$ vertices $S_i^j$, hence by construction there exist at most $w(S_i)$ paths in $P$ that passes through vertices of $S_i^j$, hence at most $w(S_i)$ elements in $T'$ belong to $S_i$.
As a consequence $T'$ is a feasible solution of \textsc{Threshold Set} on instance $(U,S,w)$.
By construction, $|T'|=q$.
\qed
\end{proof}
}

\begin{theorem}
\label{teo:inapprox}
\textsc{MaxCDP} and \textsc{MaxCDDP} cannot be approximated in FPT-time within any function $\rho$ of the optimum, unless FPT=W[1].
\end{theorem}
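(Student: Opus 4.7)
The plan is to combine Lemma~\ref{lemInapprox1} and Lemma~\ref{lemInapprox2} to argue that the reduction from \textsc{Threshold Set} to \textsc{MaxCDP} preserves the optimum exactly, and then observe that the same reduction works verbatim for \textsc{MaxCDDP}. This lets the inapproximability of \textsc{Threshold Set} proved by Marx~\cite{DBLP:journals/jcss/Marx13} transfer to both problems.

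More concretely, I would first observe that the two lemmas together establish $opt(U,\S,w) = opt(G,s,t)$, where the latter is the optimum of \textsc{MaxCDP} on the constructed instance. Note that the construction of $(G,s,t)$ is polynomial in the size of $(U,\S,w)$, and in particular runs in fpt time with respect to any parameter. Now suppose, for contradiction, that \textsc{MaxCDP} admits an fpt cost $\rho$-approximation algorithm $\mathcal{A}$ for some function $\rho$. Given an instance $(U,\S,w)$ of \textsc{Threshold Set} together with an integer $k$, construct $(G,s,t)$ in polynomial time and run $\mathcal{A}$ on $(G,s,t)$ with parameter $k$: by the definition of fpt cost $\rho$-approximation and the equality of optima, $\mathcal{A}$ rejects whenever $opt(U,\S,w) < k$ and accepts whenever $k \le opt(U,\S,w)/\rho(opt(U,\S,w))$. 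This would yield an fpt cost $\rho$-approximation for \textsc{Threshold Set}, contradicting Marx's theorem, hence FPT=W[1].

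For \textsc{MaxCDDP}, the key observation, already implicit in the proof of Lemma~\ref{lemInapprox2}, is that any solution of \textsc{MaxCDP} on the constructed graph $(G,s,t)$ consists of vertex-disjoint uni-color paths whose colors are pairwise distinct: indeed, by construction, the only neighbor of $s$ carrying color $c_i$ is $s_i$, so at most one path of each color can appear in any feasible solution. Therefore the set of feasible solutions of \textsc{MaxCDP} and \textsc{MaxCDDP} on $(G,s,t)$ coincide, and the optima agree. Lemma~\ref{lemInapprox1} likewise produces color-disjoint paths since distinct elements $i \in T'$ yield paths of distinct colors $c_i$. Hence the same reduction, with the same equality of optima, rules out any fpt cost $\rho$-approximation for \textsc{MaxCDDP} as well, and the theorem follows.

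The main obstacle is essentially bookkeeping: verifying that the value-preserving correspondence given by the two lemmas is really exact (not just a $\pm O(1)$ relation), so that the threshold between rejection and acceptance in the fpt cost $\rho$-approximation definition is transported without distortion through the reduction. Once that is in hand, the rest is a straightforward transfer of Marx's negative result.
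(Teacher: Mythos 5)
Your proposal is correct and follows essentially the same route as the paper: it invokes Marx's FPT-inapproximability of \textsc{Threshold Set}, uses Lemmas~\ref{lemInapprox1} and~\ref{lemInapprox2} to transfer the optimum exactly through the polynomial-time reduction, and handles \textsc{MaxCDDP} by observing that every feasible solution of \textsc{MaxCDP} on the constructed instance already consists of color-disjoint paths (each color $c_i$ path must pass through $s_i$). Your write-up is merely more explicit than the paper's about how the accept/reject thresholds in the fpt cost $\rho$-approximation definition are preserved.
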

{
\begin{proof}
The theorem holds for \textsc{MaxCDP} since \textsc{Threshold Set} cannot be approximated within any function $\rho$ of the optimum, 
unless FPT=W[1]~\cite{DBLP:journals/jcss/Marx13}, and from the properties of the polynomial time
reduction proved in Lemma~\ref{lemInapprox1} and Lemma~\ref{lemInapprox2}. 

For \textsc{MaxCDDP}, it holds from the fact that in the described reduction all the paths have a distinct color.
\qed
\end{proof}
}


%

\section{Conclusion}

In this paper, we continued the complexity analysis of \textsc{MaxCDP} 
and deepen the hardness analysis according to the structure of the input graph.
We also introduced a new variant, called \textsc{MaxCDDP}, asking for a solution with vertex and disjoint colors.

In the future, we would like to further deepen the analysis on the the structural complexity of \textsc{MaxCDP} and \textsc{MaxCDDP}. 
For example is \textsc{MaxCDP} in XP when the parameter if the size of the Feedback Vertex Set of the input graph? 
Is \textsc{MaxCDP} FPT when the parameter if the distance to disjoint paths of the input graph? 
Moreover, we have shown that \textsc{MaxCDDP} can be solved within a factor $\frac{1}{2}$
by a parameterized algorithm, where the parameter is the size of the vertex cover. A natural question
is whether \textsc{MaxCDDP} is fixed-parameter tractable when parameterized by the size of the vertex cover.

We would also like to improve the running time of our algorithms and to match them with some lower bounds 
under widely believed assumptions in order to have a fine-grained complexity analysis of these problems.


\bibliographystyle{abbrv}

\bibliography{biblio}


\end{document}